\documentclass{article}
\usepackage[a4paper,margin=1in,footskip=0.25in]{geometry}
\usepackage[utf8]{inputenc}
\usepackage{graphicx}%
\usepackage{multirow}%
\usepackage{amsmath,amssymb,amsfonts}%
\usepackage{amsthm}%
\usepackage{mathrsfs}%
\usepackage[title]{appendix}%
\usepackage{xcolor}%
\usepackage{textcomp}%
\usepackage{manyfoot}%
\usepackage{booktabs}%
\usepackage{algorithm}%
\usepackage{algorithmicx}%
\usepackage{algpseudocode}%
\usepackage{listings}%
\usepackage{bm}
\usepackage{tikz}
\usepackage{comment}
\newtheorem{theorem}{Theorem}%
\newtheorem{proposition}[theorem]{Proposition}%
\newtheorem{example}{Example}%
\newtheorem{remark}{Remark}%
\newtheorem{definition}{Definition}%
\usepackage{comment}

\usepackage{pifont}

\usepackage{sourcecodepro}

\usepackage{graphicx}
\usepackage{rotating}
\usepackage{mathtools} 
\newtheorem{lemma}{Lemma}[]
\newtheorem{axiom}{Axiom}
\newcommand{\PX}{2^X}

\newcommand{\D}[1]{\textcolor{red}{#1}}

\newcommand\sm[1]{\textcolor{blue}{#1}}

\title{Comparison games and ranking of players}

\author{
  Daniela Bubboloni \\
  \small Dipartimento di Matematica e Informatica ``Ulisse Dini''
Universit\`a degli Studi di Firenze,\\
\small viale Morgagni, 67/a, 50134, Firenze, Italy;
e-mail: \texttt{daniela.bubboloni@unifi.it}\\
  \and
  Stefano Moretti \\
  \small LAMSADE, CNRS, Université Paris-Dauphine, Université PSL, Paris, France;\\
  \small e-mail: \texttt{stefano.moretti@dauphine.fr}
}

\date{}

\date{}

\begin{document}

\maketitle

\begin{abstract}
This work addresses the problem of assessing player importance in coalitional settings where the available information concerns the relative strength between pairs of coalitions, rather than the absolute worth of each coalition. We introduce a novel framework that is flexible enough to represent all coalitional pseudo-games and, through the use of coalitional networks, naturally accommodates scenarios with limited or heterogeneous coalition comparisons. Importantly, this framework still enables the computation of semivalues of pseudo-games, such as the Banzhaf and Shapley values, that can be expressed as weighted sums of differences in specific coalition comparisons, thus offering interpretations beyond traditional approaches. Furthermore, for ranking players rather than computing exact numerical attributions, we introduce the concept of a player’s score, which simplifies the process of determining rankings based on semivalues, and shifts the perspective from average marginal contribution to average coalitional worth. This turns out to be particularly enlightening for the Banzhaf value.
\end{abstract}

\section{Introduction}

The Shapley value \cite{S1953} and the Banzhaf value \cite{banzhaf1964weighted}, as well as other solutions in the family of semivalues \cite{Dubey}, are gaining growing attention for converting group performance data into individual importance metrics, with significant success in recent  applications to explainable AI
 \cite{lundberg2017unified, wang2023data, fryer2021shapley}.
In many situations, the available information about coalition performance is comparative, focusing on pairs of coalitions rather than providing a numerical attribution for each coalition. Consider, for example,  team sports like basketball, where the performance scores of teams or squads is derived from direct matches between different teams, such as the number of points scored, or ball possession, etc. \cite{kubatko2007starting}.
Another example is evaluating public policies, which can be viewed as sets of norms and are compared to other public policies to evaluate their relative alignment with predefined ethical values
 \cite{serramia2020qualitative}. 
In argumentation theory, several definitions of   extension-ranking semantics have been proposed to establish whether a set of arguments is more plausible  than another set \cite{skiba2021ranking}. In multi-criteria decision analysis, the aggregation of preferences over criteria generates a relation between sets of alternatives that represent their comparative levels of satisfaction of the criteria \cite{suzuki2025aggregating}.
In machine learning, when models are trained on different feature sets, their classification performance is assessed using various metrics such as accuracy, sensitivity, and specificity. However, aggregating these diverse criteria into a unified scale is non-trivial, which often leads to pairwise comparisons between feature sets \cite{gourv2026feature}. 

Ultimately, the objective of these problems remains ranking individual elements, whether they are basketball players, norms, arguments, alternatives, or features taking into account ordinal coalitional information. 

To address this problem, in this paper we introduce and study a novel general framework aimed to convert
the information contained within a
{\it comparison function} $f$, which associates to each pair of coalitions  $S$ and $T$  a real number $f(S,T)$ representing how much $S$ is better than  $T$, into an attribution of importance for each single player.  
Our approach begins by examining how to collect information from comparison functions into a suitable comparison pseudo-game, that is a generalization of a  coalitional game where even the empty coalition can have a non-zero worth. Although assigning zero worth to the empty coalition is a standard assumption in cooperative game theory, consistent with the economic interpretation that utility only arises from the interaction of individuals, in many practical contexts, assigning a non-zero worth to the empty coalition can be useful for establishing a reference level or benchmark. This is particularly true in the existing applications of solutions for coalitional games  to assess the importance of features in a machine learning model, where the worth of the empty coalition often represents the expected output of the baseline model with no features  \cite{lundberg2017unified, wang2023data, fryer2021shapley,gourv2026feature}.
We then explore the key properties of pseudo-games and analyze how semivalues can be utilized to assess the importance of individual players.


Concerning comparison pseudo-games, we show that the average worth across all coalitions is zero.  We also prove  that every pseudo-game with zero average is a comparison pseudo-game  (Theorem \ref{char0med}). Since every pseudo-game can be seen as the sum of a constant pseudo-game plus a pseudo-game of zero average, the findings presented in this study deepen the understanding of the whole space of pseudo-games. In other words, the class of comparison pseudo-games seems to be a crucial class playing a main role in the description of every pseudo-game and hence also of every game.
 Moreover,  we demonstrate that every semivalue is a semivalue of a comparison pseudo-game and we show that a semivalue of a comparison pseudo-game can be computed as a weighted sum of differences in the generating comparison function where the weights depend on the probability vector defining  the semivalue (Theorem \ref{prop:banz}). Specifically, we provide a formula to compute a semivalue for player $i$  considering  pairs of coalitions $S$ and $T$ excluding $i$  and combining the asymmetric contributions of $i$ when $i$ is added only to $S$ or only $T$ with its symmetric contributions  when $i$ is added to both coalitions $S$ and $T$.
For the Banzhaf value,  the allocation to player $i$ reduces to a sum over pairs of coalitions where $i$ is in one but not in the other, and can be reinterpreted 
as a measure of the overall improvement in forming coalitions containing $i$ compared to those not containing $i$.

 We also study various notions of comparison functions (Net-Outdegree, Net-Flow, and Net-Schulze), which are all rooted in a concept of coalitional network where nodes are coalitions and the capacity of edges indicates the strength of direct comparisons between pairs of coalitions. Within this framework, we demonstrate that ranking players by  weighted average values of the coalitions they can form not only has a natural interpretation but is also equivalent to ranking them according to suitable  semivalues.
In this respect, we point-out that, although several classes of cooperative games based on alternative notions of connectivity and flow on graphs  exist in the literature \cite{borm2001operations,amer2004connectivity,michalak:hal-01510964}, these approaches differ from ours for at least two main reasons. A first reason is ``structural'': in the game classes on graphs studied in the literature, individual players typically own parts of the graph, such as nodes or edges. In our case, the nodes of coalitional networks are coalitions themselves, not individual players.
Second, another difference lies in the goals of the two approaches. In existing games from the literature, a classic problem is how to allocate the cost or gain of the grand coalition associated with the use of the optimal structure capable of serving all players, while accounting for their roles in the different coalitions. In contrast, the goal of our study is to construct a pseudo-game that represents the strength of each coalition based on comparisons with other coalitions. Through the use of semivalues or other solution concepts, this pseudo-game enables a numerical attribution for the players that reflects their contribution to the comparisons between pairs of coalitions, irrespective of whether the sum of the players' importance attributions covers the total worth of the grand coalition.



The roadmap of the paper is as follows. After some preliminaries in Section \ref{sec:prel}, we introduce the framework of comparison pseudo-game and discuss their properties in Section \ref{sec:comppseudo}. Coalitional networks and alternative definitions of comparison pseudo-games based on the notions of net-outdegree, net-flow and net-Schulze are studied in Section \ref{sec:excompf}. Section \ref{sec:competsem} is devoted to the computation of semivalues on comparison pseudo-games. The concept of score and the rankings provided by a semivalue  and its associated score in a comparison pseudo-game is studied in Section \ref{sec:score}. Section \ref{sec:concl} concludes and suggests some future research directions.


\section{Preliminaries}\label{sec:prel}

We denote by $\mathbb{N}$ the set of positive integers and we set $\mathbb{N}_0\coloneq \mathbb{N}\cup\{0\}.$
For $\alpha\in \mathbb{R},$ we set $[\alpha]\coloneqq\{n\in \mathbb{N}: n\leq \alpha \}$ and $[\alpha]_0\coloneqq\{n\in \mathbb{N}_0: n\leq \alpha \}$.
	Let $X$ be a finite set  of elements.  The set of subsets of $X$ is denoted by $2^X$.  
	Subsets of $X$ are also called {\it coalitions}.  
	Given a coalition $S \in 2^X$, its size is indicated by $|S|$. 
    
A {\it relation} 
 $R$ on $X$ is a subset of
$X \times X$. The set of binary relations on $X$ is denoted by $\bm{\mathcal{R}}(X).$ 

\noindent
A {\it pseudo coalitional game} (or simply a {\it pseudo-game}) is a pair $(X,v)$ where $X$ denotes a nonempty finite 
set of players and  $v:\PX \to \mathbb{R}$ is a {\it set function} associating with  any coalition $S \in 2^X$ its  worth $v(S) \in \mathbb{R}$; if additionally $v(\varnothing)=0$, a pseudo-game $(X,v)$ is called {\it coalitional game} (or simply {\it game}). 
When $X$ is clearly fixed, we identify a pseudo-game $(X,v)$ with its set function $v$.
The set of pseudo-games on  $X$ is denoted by  $PG^X$; the set of games by $G^X$. It is clear that $PG^X$ is a $\mathbb{R}$-vector space with respect to the usual sum of functions and multiplication of real numbers and functions and that its dimension is $2^{|X|};$
$G^X$ is a subspace of $PG^X$ of dimension $2^{|X|}-1$.



Let  $p_s$, with $s\in\{0,\ldots, |X|-1\}$ be non-negative numbers such that
\begin{equation}\label{rel:pnumbers}
\sum_{s=0}^{|X|-1} \binom{|X|-1}{s}p_{s}=1.
\end{equation}
The number $p_{s}$ 
is interpreted as the probability that a player $i\in X$ joins a coalition of size $s$, with $s\in \{0, \ldots, |X|-1\}$. So, all players in $X$ have the same probability to join any coalition of size $s$.
We call $p=(p_s)_{s=0}^{|X|-1}\in \mathbb{R}_+^{|X|}$ a probability vector with respect to $X$.
The \emph{semivalue} on $PG^X,$  with respect to $p$ (see \cite{Dubey}), is the map $\psi^p:PG^X\rightarrow\mathbb{R}^X$ defined, every $v\in PG^X$ and every $i\in X$, by 
\[
\psi^p_i(v)=\sum_{S\subseteq X\setminus\{i\}}p_{|S|}(v(S\cup\{i\})-v(S)).
\]
Note that $\psi^p$ is a linear map between vector spaces.
Well-known semivalues  are the Shapley value defined by $p_{s}=\frac{1}{|X|}\frac{1}{\binom{|X|-1}{s}}$, and the Banzhaf value, defined by $p_{s}=\frac{1}{2^{|X|-1}}$, for every $s\in [ |X|-1]_0$. For brevity, in the following we will denote the Shapley value by $\phi$ and the Banzhaf value by $\beta$. 
We observe now a well-known fact. We prove it for the sake of completeness. 
\begin{lemma}\label{lem:semidiff} Let $v\in PG^X$, $p$ a probability vector and $i,j\in X$. Then we have 
 $$\psi_i^p(v) -  \psi_j^p(v)= \sum_{T\in 2^{X\setminus\{i,j\}}}\big(p_{|T|}+p_{|T|+1}\big)\big(v(T\cup\{i\})-v(T\cup\{j\})\big).$$  
\end{lemma}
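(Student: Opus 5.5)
If $i=j$ both sides vanish, so assume $i\neq j$. The plan is to expand both semivalue coordinates by the definition and split each according to whether the other player is present. Every $S\subseteq X\setminus\{i\}$ is either $S=T$ or $S=T\cup\{j\}$ with $T\subseteq X\setminus\{i,j\}$, and each such $S$ arises exactly once. Hence
\[
\psi_i^p(v)=\sum_{T\in 2^{X\setminus\{i,j\}}}\Big[p_{|T|}\big(v(T\cup\{i\})-v(T)\big)+p_{|T|+1}\big(v(T\cup\{i,j\})-v(T\cup\{j\})\big)\Big],
\]
and symmetrically, exchanging the roles of $i$ and $j$,
\[
\psi_j^p(v)=\sum_{T\in 2^{X\setminus\{i,j\}}}\Big[p_{|T|}\big(v(T\cup\{j\})-v(T)\big)+p_{|T|+1}\big(v(T\cup\{i,j\})-v(T\cup\{i\})\big)\Big].
\]

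Next I subtract the two expressions term by term over the same index set $2^{X\setminus\{i,j\}}$. In the $p_{|T|}$ part the $v(T)$ terms cancel, leaving $p_{|T|}\big(v(T\cup\{i\})-v(T\cup\{j\})\big)$. In the $p_{|T|+1}$ part the $v(T\cup\{i,j\})$ terms cancel, leaving $p_{|T|+1}\big(-v(T\cup\{j\})+v(T\cup\{i\})\big)$. Adding these two contributions gives exactly $\big(p_{|T|}+p_{|T|+1}\big)\big(v(T\cup\{i\})-v(T\cup\{j\})\big)$, which is the claimed formula.

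The only point needing care is the index bookkeeping. I must check that $|T|+1\le |X|-1$, so that $p_{|T|+1}$ is defined. This holds because $|T|\le |X|-2$. Nothing else is delicate, since the argument is pure rearrangement of finite sums and uses no property of $p$ beyond its indexing.
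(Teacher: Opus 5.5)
Your proof is correct and is essentially the paper's own argument: split each of $\psi_i^p(v)$ and $\psi_j^p(v)$ according to whether the other player lies in $S$, reindex both over $T\in 2^{X\setminus\{i,j\}}$, and cancel the $v(T)$ and $v(T\cup\{i,j\})$ terms. Your explicit handling of the case $i=j$ and your check that $|T|+1\le |X|-1$ are small additions that the paper leaves implicit.
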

\begin{proof} We have 
 \begin{eqnarray}
\label{first}
&&\psi_i^p(v) -  \psi_j^p(v)=\\
&=&
\sum_{S\in 2^{X\setminus\{i\}}}p_{|S|}\big(v(S\cup\{i\})-v(S)\big)
-\sum_{S\in 2^{X\setminus\{j\}}}p_{|S|}\big(v(S\cup\{j\})-v(S)\big) \nonumber\\
&=&\sum_{S\in 2^{X\setminus\{i,j\}}}\big[p_{|S|}\big(v(S\cup\{i\})-v(S)\big)+p_{|S|+1}\big(v(S\cup\{i,j\})-v(S\cup\{j\})\big)\big]\nonumber \\
&&\ \ \ -\big[p_{|S|}\big(v(S\cup\{j\})-v(S)\big)+p_{|S|+1}\big(v(S\cup\{i,j\})-v(S\cup\{i\})\big)\big]\nonumber \\
&=&\sum_{S\in 2^{X\setminus\{i,j\}}}\big(p_{|S|}+p_{|S|+1}\big)\big(v(S\cup\{i\})-v(S\cup\{j\})\big).\nonumber
\end{eqnarray}
\end{proof}


\section{Comparison functions and comparison pseudo-games}\label{sec:comppseudo}

We introduce the main object of our analysis.

\begin{definition}\label{def:comparison}{\rm Let $X\neq \varnothing$ be a finite set. A function $f:2^X\times 2^X\rightarrow \mathbb{R} $ is called a {\it comparison function} for coalitions in $2^X$. The set of comparison functions  is denoted by $\mathcal{C}(2^X)$. Note that $\mathcal{C}(2^X)=\mathbb{R}^{2^X\times 2^X}$ is a $\mathbb{R}$-vector space of dimension $2^{2|X|}$.}
  \end{definition}
For $S,T\in 2^X$, we interpret the number $f(S,T)$ as a quantitative description of how much $S$ is better than $T$ according to a given criterion. We sometimes prefer the more compact  notation $f_{S,T}$ to $f(S,T)$. A basic example of comparison function is, for instance, $f(S,T)\coloneq |S|-|T|.$ In this case we have $f(S,T)=-f(T,S)$. In the literature there exist comparison functions $f:Y\times Y\rightarrow \mathbb{R}$, where $Y$ is a generic set (see \cite{dutta1999comparison}). There, differently from us, the  property $f(x,y)=-f(y,x)$, for every $(x,y)\in Y\times Y$ is invoked and required in the definition. 

\begin{definition}\label{def:comp2} {\rm A comparison function $f$ for $2^X$ induces the coalitional pseudo-game $F$ with set of players $X$, defined for every $S\in 2^X$,  by 
\[
F(S)\coloneqq\sum_{T \in \PX}( f_{S,T} - f_{T,S}).
\]
We call $F$, the  {\it comparison pseudo-game induced by $f$}.}
\end{definition}
 The worth $F(S)$ of coalition $S$ is immediately interpreted as $2^{|X|}$ times the mean net improvement that coalition $S$ achieves against all other possible coalitions. Note that 
the pseudo-game $F$ is a game if and only if $\sum_{T \in \PX} f_{\varnothing,T}=\sum_{T \in \PX} f_{T,\varnothing}$. Note  also that 
$
F(S)=\sum_{T \in \PX \setminus \{S\}}( f_{S,T} - f_{T,S})
$
because $f_{S,S}-f_{S,S}=0$ holds for all $S\in 2^X.$  
\begin{definition}\label{rho}
{\rm We define $\rho:\mathcal{C}(2^X)\rightarrow PG^X,$ by setting $\rho(f)\coloneqq F$, for all $f\in \mathcal{C}(2^X).$
The pseudo-games in $\rho(\mathcal{C}(2^X))\subseteq PG^X$ are called the {\it comparison pseudo-games on $X$}.}
\end{definition}

 It is immediately observed that $\rho$ is a linear map.
Moreover, $\rho$ is not injective because 
  every constant function $f\in \mathcal{C}(2^X)$ is in the kernel of $\rho$. 
 
The set of comparison  pseudo-games 
can be completely characterized taking into consideration the concept of medium value. 
\begin{definition}\label{def:nul-sum}{\rm  Let $v\in PG^X.$ The {\it medium vale} of the pseudo-game $v$ is the real number 
$$\overline v\coloneqq \frac{1}{2^{|X|}}\sum_{S\in 2^{X}} v(S).$$
Given $k\in\mathbb{R}$, the set of pseudo-games on $X$ with medium value $k$ is denoted by $PG_k^X.$ We also denote by $m$ the map $m:PG^X\rightarrow \mathbb{R}$ defined, for every $v\in PG^X$, by $m(v)\coloneqq \overline v.$}
\end{definition}
Note that $m$ is linear with kernel $PG_0^X.$
Note also that, for every $k\in\mathbb{R}$, we have $PG_k^X\neq \varnothing,$ because the constant pseudo-game $v_k$ defined by $v_k(S)=k$, for all $S\in 2^X$, has medium value  $k.$ As a consequence, $m$ is surjective and the dimension of $PG_0^X$ is $2^{|X|}-1$. Obviously $PG_k^X$ is nothing else than the translate of the subspace $PG_0^X$ by $v_k$.
In particular, $PG^X=\bigcup_{k\in\mathbb{R}} PG_k^X,$ with $PG_k^X\cap PG_l^X=\varnothing$ for $k\neq l\in\mathbb{R}.$ Thus the sets $ PG_k^X$, for $k\in\mathbb{R}$, form a partition of $PG^X.$ \\
As a consequence of the above discussion, we split  any $v\in PG^X$ as
$$v=(v-v_{\overline{v}})+ v_{\overline{v}},$$
where $w(v)\coloneqq v-v_{\overline{v}}\in PG_0^X.$

Thus, from the knowledge of the pseudo-games with medium value  $0$, we usually recover a general knowledge on the whole set of pseudo-games. 
We call $w(v)$ the pseudo-game of medium value $0$ associated with $v.$
\begin{theorem}\label{char0med} Let $v\in PG^X.$ The following facts are equivalent:
\begin{itemize}
\item[$(i)$] $v$ is a comparison pseudo-game;
\item[$(ii)$] $v$ is a  pseudo-game of medium value $0$.
\end{itemize}
In other words, $\rho(\mathcal{C}(2^X))=PG_0^X.$  Moreover, $\rho$ is not surjective and $\mathrm{dim}\ \mathrm{Ker} (\rho)=2^{|X|}(2^{|X|}-1)+1.$
\end{theorem}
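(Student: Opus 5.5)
We show two inclusions and then compute the kernel by rank--nullity.

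For $\rho(\mathcal{C}(2^X))\subseteq PG_0^X$, let $f\in\mathcal{C}(2^X)$ and $F=\rho(f)$. Summing over all $S$ gives
$$\sum_{S\in 2^X}F(S)=\sum_{S,T\in 2^X}f_{S,T}-\sum_{S,T\in 2^X}f_{T,S}.$$
Both double sums run over all of $2^X\times 2^X$, so they coincide after renaming the variables. Hence $\overline F=0$.

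For the reverse inclusion, fix $v\in PG_0^X$ and look for a simple preimage. The natural choice is
$$f_{S,T}\coloneqq \frac{v(S)}{2^{|X|+1}},$$
which depends only on the first argument. Then
$$F(S)=\sum_{T}\frac{v(S)-v(T)}{2^{|X|+1}}=\frac{2^{|X|}v(S)-2^{|X|}\overline v}{2^{|X|+1}}=\frac{v(S)}{2}.$$
So $f_{S,T}\coloneqq v(S)/2^{|X|}$ works: it gives $F(S)=v(S)-\overline v=v(S)$, using $\overline v=0$. This proves $\rho(\mathcal{C}(2^X))=PG_0^X$, and hence the equivalence of $(i)$ and $(ii)$. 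The same computation shows, for general $v$, that $\rho(f)=w(v)$, which is a nice remark.

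Non-surjectivity follows at once. $PG_0^X$ is a proper subspace of $PG^X$, since, for example, a constant pseudo-game $v_k$ with $k\neq 0$ has medium value $k\ne 0$. Equivalently, $\dim PG_0^X=2^{|X|}-1<2^{|X|}$, as already noted after Definition \ref{def:nul-sum}.

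For the kernel, apply rank--nullity to the linear map $\rho$ on $\mathcal{C}(2^X)$, which has dimension $2^{2|X|}$. The image has dimension $2^{|X|}-1$, so
$$\dim\mathrm{Ker}(\rho)=2^{2|X|}-2^{|X|}+1=2^{|X|}(2^{|X|}-1)+1.$$
The only point needing care is the correct identification of the image; this is handled by the two inclusions. No step is a real obstacle, and the only delicate part is the normalization constant in the preimage construction.
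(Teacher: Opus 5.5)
Your proof is correct and follows the paper's argument: summation for $\rho(\mathcal{C}(2^X))\subseteq PG_0^X$, an explicit preimage for the converse, a nonzero constant pseudo-game for non-surjectivity, and rank--nullity for the kernel. The only difference is your preimage $f_{S,T}=v(S)/2^{|X|}$ versus the paper's antisymmetric $f_{S,T}=(v(S)-v(T))/2^{|X|+1}$ (drop the initial $2^{|X|+1}$ attempt from your write-up); the two differ by the symmetric function $(v(S)+v(T))/2^{|X|+1}$, which lies in $\mathrm{Ker}(\rho)$, so the change is inessential.
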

\begin{proof} Let $n\coloneq|X|$.  \smallskip

$(i)\Rightarrow (ii)$ Let $v$ be a comparison pseudo-game and $f\in \mathcal{C}(2^X)$ be such that  $\rho(f)=v$. Then, for every $S\in 2^X,$ we have $$v(S)=\sum_{T\in 2^X}(f(S,T)-f(T,S))$$ and hence
$$\overline v= \frac{1}{2^n}\sum_{S\in 2^{X}} v(S)= \frac{1}{2^n}\sum_{S\in 2^{X}}\sum_{T\in 2^X}(f(S,T)-f(T,S))$$
$$=\frac{1}{2^n}\Big(\sum_{S,T\in 2^{X}}f(S,T)-\sum_{S,T\in 2^{X}} f(T,S)\Big)=0.  $$
Thus $v\in PG_0^X.$
 \smallskip
 
  $(ii)\Rightarrow (i)$ Let $v\in PG_0^X$. We define $f\in \mathcal{C}(2^X)$ by setting $$f(S,T)\coloneqq \frac{v(S)-v(T)}{2^{n+1}},$$
  for all $(S,T)\in 2^X\times 2^X.$ We show that $F=\rho(f)$ coincides with $v$. Let $S\in 2^X.$ Then we have 
\begin{eqnarray*} F(S)&=& \sum_{T\in 2^X}\left(\frac{v(S)-v(T)}{2^{n+1}}-\frac{v(T)-v(S)}{2^{n+1}}\right)\\
&=&\frac{1}{2^{n+1}}\sum_{T\in 2^X}2\big(v(S)-v(T)\big)=\frac{1}{2^{n}}\sum_{T\in 2^X}\big(v(S)-v(T)\big)\\
&=&\frac{1}{2^{n}}\big(2^nv(S)-\sum_{T\in 2^X}v(T)\big)=v(S)-\overline{v}=v(S).
\end{eqnarray*}
 Now the fact that $\rho$ is not surjective is a consequence of $PG_1^X\neq \varnothing.$ We finally compute the kernel of $\rho.$ Since, by $(i)$-$(ii)$, we know that $\rho(\mathcal{C}(2^X))=PG_0^X,$ we deduce that 
 $\mathrm{dim}(\mathrm{Im}(\rho))=2^{|X|}-1$.
 
It follows that $$\mathrm{dim}\ \mathrm{Ker} (\rho)= \mathrm{dim}(\mathcal{C}(2^X)) -2^{|X|}+1=2^{2|X|}-2^{|X|}+1
 =2^{|X|}(2^{|X|}-1)+1.$$

\end{proof}

Observe that Theorem \ref{char0med} can be rephrased by $\mathrm{Im}(\rho)=\mathrm{Ker} (m).$
In particular, given $v\in PG^X$, the pseudo-game $w(v)$ of medium value  $0$ associated with $v$ is a comparison pseudo-game. Thus every pseudo-game $v$ is the sum of the comparison game $w(v)$ and of the constant pseudo-game $v_{\overline{v}}$ expressing its medium value $\overline{v}$.

\section{Coalitional networks and comparison pseudo-games}\label{sec:excompf}

Comparison functions are in a very strict relation  with coalitional networks. A {\it coalitional network} on $2^X$ is a triple $\mathcal{N}=(2^X, A, c)$ where the set of coalitions $2^X$ is its {\it vertex} set, $A=\{(S,T)\in 2^X \times 2^X :S \neq T\}$ is its {\it arc} set and $c:A\rightarrow \mathbb{R}$ its {\it capacity}.  The set of networks with vertex set $2^X$ is a vector space on $\mathbb{R}$ denoted by $\mathcal{N}(2^X).$ 
 Extending $c$ from $A$ to $2^X\times 2^X$, by setting $c(S,S)=0,$ for every $S\in 2^X$ gives a  comparison function for $2^X$ naturally induced by $\mathcal{N}$. We denote this extension belonging to $\mathcal{C}(2^X)$ again with $c$. The subclass of networks with non-negative integer valued capacities is crucial for some of our constructions. We denote the set of those networks by $\mathcal{N}_0(2^X).$
 
 \subsubsection*{The Net-Outdegree  pseudo-game}\label{outdegree}
Let $\mathcal{N}=(2^X, A, c)\in\mathcal{N}(2^X).$ The comparison function $c$ produces the comparison pseudo-game $C^\mathcal{N}: 2^X\rightarrow\mathbb{R}$, which we call  {\it Net-Outdegree  pseudo-game}, given by 
\[
C^\mathcal{N}(S)=\sum_{T \in \PX}\big( c(S,T) - c(T,S) \big)
\]
for any $S \in 2^X$.
The number $C^\mathcal{N}(S)$ is a well-known object in graph theory, that is, the difference between the so-called outdegree $o(S)=\sum_{T \in \PX} c(S,T)$  and the so-called indegree $i(S)=\sum_{T \in \PX} c(T,S)$ of the vertex $S,$  and it is called the net-outdegree of $S$.  
Moreover, the concept of net-outdegree is fundamental for the axiomatization of network solutions satisfying neutrality, consistency and cancellation as well as main social choice functions, like the Borda Rule and the Approval Voting, defined over preference profiles of a large variety of types. Those facts are widely explored in \cite{bubboloni2025generalization}.
However, it seems that the Net-Outdegree pseudo-game is a novelty.

\subsection{Coalitional relations and coalitional networks}
 
 Remarkably any coalitional relation and, more generally any profile of coalitional relations give rise to a coalitional network completely encoding their information. We now explain such constructions.
\begin{definition}\label{conet}{\rm Let $R\in \bm{\mathcal{R}}(2^X)$. The
{\it coalitional network associated with }$R$ is the network $\mathcal{N}^{R}=(2^X, A, c^{R})$ where  $c^{R}:A \longrightarrow \{0,1\}$ is the map defined by $c^{R}(S,T)=1$, if $(S,T)\in A\cap R$ and $c^{R}(S,T)=0$, otherwise.
As usual, we extend $c^{R}$ to $2^X\times 2^X$ by setting $c^{R}(S,S)=0$, for all $S\in 2^X$ in order to get $c^{R}\in \mathcal{C}(2^X).$\\
More generally, we associate a network $\mathcal{N}^{R}=(2^X, A, c^{R})\in \mathcal{N}_0(2^X)$ with a profile of relations $R=(R_1,\dots,R_k)\in \bm{\mathcal{R}}(2^X)^k,\  k\in \mathbb{N}$, by setting its capacity equal to $$c^R\coloneqq \sum_{i=1}^k c^{R_i}.$$
Note that $c^R$ is called in \cite{suzuki2026ranking} ``state of opinion'' and $c^R(S,T)$ indicates the number of opinions, represented by the relations $R_1,\dots,R_k$, according to which a coalition $S$ is at least
as preferred as a coalition $T$, for all $S,T$ in $2^X$.}
\end{definition}

\begin{example}\label{ex:1}
{\rm Let $X=\{1,2,3\}$, and the coalitional relation $R \in \bm{\mathcal{R}}(2^X)$ be given by
$$R=\{(\{1,2\},\{1,3\}),(\{1,2,3\},\{1,2\}),(\{1,3\}, \{1,2,3\}), (\{1,3\},\{1\}),(\{2,3\}),(\{1,2\}),(\{2,1,2\})\},$$
The following picture represents the corresponding coalitional network $\mathcal{N}^{R}$ where each depicted arc $a \in A$ has capacity $c(a)=1$, and each non-depicted arc  has capacity $0$. 
\begin{figure}[h!]\label{fig:ex1}
\centering
 \tikzset{every picture/.style={line width=0.75pt}} 

\begin{tikzpicture}[x=0.75pt,y=0.75pt,yscale=-1,xscale=1]

\draw    (176.5,1317.8) -- (196.5,1317.8) ;
\draw [shift={(198.5,1317.8)}, rotate = 180] [color={rgb, 255:red, 0; green, 0; blue, 0 }  ][line width=0.75]    (10.93,-3.29) .. controls (6.95,-1.4) and (3.31,-0.3) .. (0,0) .. controls (3.31,0.3) and (6.95,1.4) .. (10.93,3.29)   ;
\draw    (215.5,1306.8) -- (214.56,1276.8) ;
\draw [shift={(214.5,1274.8)}, rotate = 88.21] [color={rgb, 255:red, 0; green, 0; blue, 0 }  ][line width=0.75]    (10.93,-3.29) .. controls (6.95,-1.4) and (3.31,-0.3) .. (0,0) .. controls (3.31,0.3) and (6.95,1.4) .. (10.93,3.29)   ;
\draw    (198.5,1282.8) -- (166.07,1308.56) ;
\draw [shift={(164.5,1309.8)}, rotate = 321.55] [color={rgb, 255:red, 0; green, 0; blue, 0 }  ][line width=0.75]    (10.93,-3.29) .. controls (6.95,-1.4) and (3.31,-0.3) .. (0,0) .. controls (3.31,0.3) and (6.95,1.4) .. (10.93,3.29)   ;
\draw    (176.5,1371.8) -- (201.5,1371.8) ;
\draw [shift={(203.5,1371.8)}, rotate = 180] [color={rgb, 255:red, 0; green, 0; blue, 0 }  ][line width=0.75]    (10.93,-3.29) .. controls (6.95,-1.4) and (3.31,-0.3) .. (0,0) .. controls (3.31,0.3) and (6.95,1.4) .. (10.93,3.29)   ;
\draw    (216.5,1330.8) -- (164.21,1362.76) ;
\draw [shift={(162.5,1363.8)}, rotate = 328.57] [color={rgb, 255:red, 0; green, 0; blue, 0 }  ][line width=0.75]    (10.93,-3.29) .. controls (6.95,-1.4) and (3.31,-0.3) .. (0,0) .. controls (3.31,0.3) and (6.95,1.4) .. (10.93,3.29)   ;
\draw    (213.5,1357.8) -- (163.29,1332.69) ;
\draw [shift={(161.5,1331.8)}, rotate = 26.57] [color={rgb, 255:red, 0; green, 0; blue, 0 }  ][line width=0.75]    (10.93,-3.29) .. controls (6.95,-1.4) and (3.31,-0.3) .. (0,0) .. controls (3.31,0.3) and (6.95,1.4) .. (10.93,3.29)   ;
\draw    (227.5,1371.8) -- (252.5,1371.8) ;
\draw [shift={(254.5,1371.8)}, rotate = 180] [color={rgb, 255:red, 0; green, 0; blue, 0 }  ][line width=0.75]    (10.93,-3.29) .. controls (6.95,-1.4) and (3.31,-0.3) .. (0,0) .. controls (3.31,0.3) and (6.95,1.4) .. (10.93,3.29)   ;

\draw (190.44,1254.8) node [anchor=north west][inner sep=0.75pt]  [color={rgb, 255:red, 208; green, 2; blue, 27 }  ,opacity=1 ] [align=left] {\{1,2,3\}};
\draw (143.77,1306.46) node [anchor=north west][inner sep=0.75pt]  [color={rgb, 255:red, 208; green, 2; blue, 27 }  ,opacity=1 ] [align=left] {\{1,2\}};
\draw (196.44,1306.46) node [anchor=north west][inner sep=0.75pt]  [color={rgb, 255:red, 208; green, 2; blue, 27 }  ,opacity=1 ] [align=left] {\{1,3\}};
\draw (250.73,1306.46) node [anchor=north west][inner sep=0.75pt]  [color={rgb, 255:red, 208; green, 2; blue, 27 }  ,opacity=1 ] [align=left] {\{2,3\}};
\draw (151.39,1359.78) node [anchor=north west][inner sep=0.75pt]  [color={rgb, 255:red, 208; green, 2; blue, 27 }  ,opacity=1 ] [align=left] {\{1\}};
\draw (203.25,1359.78) node [anchor=north west][inner sep=0.75pt]  [color={rgb, 255:red, 208; green, 2; blue, 27 }  ,opacity=1 ] [align=left] {\{2\}};
\draw (255.92,1359.78) node [anchor=north west][inner sep=0.75pt]  [color={rgb, 255:red, 208; green, 2; blue, 27 }  ,opacity=1 ] [align=left] {\{3\}};
\draw (207.87,1406.36) node [anchor=north west][inner sep=0.75pt]  [color={rgb, 255:red, 208; green, 2; blue, 27 }  ,opacity=1 ] [align=left] {$\displaystyle \varnothing $};

\end{tikzpicture}

 \caption{the coalitional network $\mathcal{N}^{R}$ of Example \ref{ex:1}.}
\end{figure}

Consider now the Net-Outdegree pseudo-game $(X,C^\mathcal{N})$ induced by the capacity $c$ of the network $\mathcal{N}^R$. We have that the Net-Outdegree  pseudo-game $(X,C^\mathcal{N})$ is as follows (sets' brackets are omitted): 
\[
\begin{array}{l}
C^\mathcal{N}(\varnothing)=C^\mathcal{N}(1)=0,\ C^\mathcal{N}(2)=1,\ C^\mathcal{N}(3)=-1,\\ C^\mathcal{N}(1,2)=-1,\  C^\mathcal{N}(1,3)=1,\  C^\mathcal{N}(2,3)=0,\  C^\mathcal{N}(1,2,3)=0.
\end{array}
\]
}
\end{example}
\begin{example}\label{ex:2}
{\rm Suppose that $k\geq 2$ teachers teach for a class $X$ of students and want to evaluate the single students, taking into account their skills in working in group. They can decide to fix, during a certain period of the scholastic activity, some groups say $G_i$ for $i\in [n], n\geq 2$ and look at the performances of those groups. Every teacher $i\in [k]$ expresses freely her opinion about some group $G_j$ working better than another group $G_k$. The expressed pairs $(G_j,G_k)$ form a set, the relation $R_i\in \mathcal{R}(2^X)$ expressed by the teacher $i$. Collecting the relations expressed by all the teachers, one gets a relation profile $R=(R_1,\dots, R_k)$ and hence a corresponding coalitional network. Such a network takes into account the opinions of the team of teachers on the groups formed in the class. By a suitable comparison function which can be, for instance, the capacity of the network, one deduces a comparison game. Finally, one obtains the contribute of every single student by a suitable semivalue, for instance the Banzhaf value. The advantage of this procedure is that calculating a semivalue does not require expressing an opinion on every possible pair of students groups. It may be that a group of students has not been formed, making it impossible to express an opinion involving that group. Moreover, certain comparisons between coalitions might be conducted only by specific teachers and not by others. For example, a group of students might be compared by one teacher with some of the groups, while the same group might be compared by another teacher with completely different groups. Despite the diversity of available information, the approach using comparison functions enables, in any case, to compute a certain desirable semivalue, appropriately combining only the information derived from the available coalition comparisons, as it will be made clear in Section \ref{sec:competsem}.
Of course, one should take care about the two choices in the model. First, which comparison function is the more suitable one; second, which semivalue is the more suitable one.
 }
\end{example}
\subsection{Other examples of comparison games}
Let $\mathcal{N}=(2^X, A, c)\in \mathcal{N}_0(2^X)$ throughout this section. 
\subsubsection*{The Net-Flow  pseudo-game} \label{flow}

  Let $S, T \in \PX$ with $S \neq T$. A {\it flow} in $\mathcal{N}$ from $S$, the {\it source}, to $T$,  the {\it sink},  is a map $f:A \rightarrow \mathbb{N}_0$ such that for every $(U,W)\in A$, $f(U,W) \leq c(U,W)$ ({\it compatibility}) and, for each $U \in \PX \setminus \{S,T\}$, we have
\[
\sum_{W \in \PX \setminus \{U\}} f(U,W) = \sum_{W \in \PX\setminus \{U\}} f(W,U) \ \ \ \mbox{({\it conservation})}
\]

The set of flows from $S$ to $T$ in $\mathcal{N}$ is denoted by $\mathcal{F}(\mathcal{N}, S, T)$. The {\it value of a flow} $f\in \mathcal{F}(\mathcal{N}, S, T)$ is the rational number
\[
\varphi(f)=\sum_{W \in \PX \setminus \{S\}} f(S,W) - \sum_{W \in \PX\setminus \{S\}} f(W,S)
\]
The number
\[
\varphi^{\mathcal{N}}_{S,T}=\max_{f \in \mathcal{F}(\mathcal{N}, S, T)} \varphi(f)
\]
is the {\it maximum flow value} from $S$ to $T$ in $\mathcal{N}$. The maximum flow value from $S$ to $T$ can be interpreted as the maximum number of arc-disjoint paths from $S$ to $T$ in $\mathcal{N}$. 

We also set $\varphi^\mathcal{N}_{S,S}\coloneqq 0,$ for all $S\in 2^X$.
This gives the comparison function $\varphi^\mathcal{N}\in \mathcal{C}(2^X),$ which we call the flow function. It produces a comparison pseudo-game, $\Phi^\mathcal{N}: 2^X\rightarrow\mathbb{R}$, which we call {\it Net-Flow  pseudo-game}, given by 
\[
\Phi^\mathcal{N}(S)\coloneqq\sum_{T \in \PX} (\varphi^\mathcal{N}_{S,T} - \varphi^\mathcal{N}_{T,S}),
\]
for all $S \in \PX$. \\
Note that $\displaystyle{\frac{\Phi^\mathcal{N}(S)}{2^{|X|}}}$ is  the  average of  maximum number of arc-disjoint paths from $S$ to any other coalition minus the maximum number of arc-disjoint paths  from any coalition to $S$.

The concept of flow is a main topic for network analysis, with both theoretical and concrete applications (see, for instance, \cite{bubboloni2018flow}, \cite{gori2024solution}, \cite{bubboloni2022paths}). However, it seems that the Net-Flow pseudo-game is a novelty.

We observe now that for certain kinds of networks the Net-Flow pseudo-game  coincides with the Net-Outdegree pseudo-game up to a multiplicative constant. A network $\mathcal{N}=(2^X, A, c)\in \mathcal{N}_0(2^X)$ is {\it balanced} if there exists $k\in \mathbb{N}_0$ such that  $c(S,T)+c(T,S)=k$, for all $(S,T)\in A.$ Such a $k$ is called the balance of $\mathcal{N}.$ Remarkably, balanced networks of balance $k$ clearly arise as the coalitional networks associated with a profile of $k$ linear order relations. 
\begin{theorem}\label{coincidenza} Let $\mathcal{N}=(2^X, A, c)$ be a balanced network. Then $\Phi^\mathcal{N}= 2^{|X|-1}C^\mathcal{N}$.
\end{theorem}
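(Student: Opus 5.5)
The plan is to reduce the Net-Flow pseudo-game to out-degrees by computing each maximum flow value in a balanced network explicitly. Let $N\coloneqq 2^{|X|}$ be the number of vertices and $k$ the balance of $\mathcal{N}$. For a vertex $U$ write $o(U)$ and $i(U)$ for its outdegree and indegree, so that $C^\mathcal{N}(U)=o(U)-i(U)$. Balance gives $o(U)+i(U)=\sum_{W\neq U}\big(c(U,W)+c(W,U)\big)=k(N-1)\eqqcolon K$ for every $U$. Summing the outdegrees counts each unordered pair of distinct vertices once with weight $k$, so $\sum_{U\in 2^X} o(U)=kN(N-1)/2=KN/2$.

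The key step is the identity
$$\varphi^\mathcal{N}_{S,T}=\min\{o(S),\,i(T)\}\qquad\text{for all } S\neq T.$$
The inequality $\le$ is immediate from the max-flow min-cut theorem, using the two trivial cuts $\{S\}$ and $2^X\setminus\{T\}$.

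For $\ge$, I express the capacity of an arbitrary cut. Take $\mathcal{U}\subseteq 2^X$ with $S\in\mathcal{U}$, $T\notin\mathcal{U}$ and $|\mathcal{U}|=m$. Arcs inside $\mathcal{U}$ contribute exactly $k\binom{m}{2}$ to the outdegree sum, so
$$c(\mathcal{U},\mathcal{U}^c)=\sum_{U\in\mathcal{U}}o(U)-k\binom{m}{2}=\sum_{W\notin\mathcal{U}}i(W)-k\binom{N-m}{2}.$$
I must show this is at least $\min\{o(S),i(T)\}$. I would argue by an exchange argument: moving a single vertex $U\neq S,T$ across the cut changes the capacity by $\pm\big(o(U)-c(\mathcal{U},U)-c(U,\mathcal{U})\big)$, with internal terms controlled by balance. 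From this I would deduce that the cut capacity, as a function of $m$, attains its minimum at an extreme $m=1$ or $m=N-1$. If this is cumbersome, I would instead invoke the known result of Bubboloni and Gori on flows in balanced networks, which is precisely this formula. I expect this step to be the main obstacle.

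Granting the identity, the rest is algebra. Set $a=o(S)$ and $b=o(T)$. Then $i(T)=K-b$ and $i(S)=K-a$, so
$$\varphi^\mathcal{N}_{S,T}-\varphi^\mathcal{N}_{T,S}=\min\{a,K-b\}-\min\{b,K-a\}.$$
If $a+b\le K$, this equals $a-b$. Otherwise it equals $(K-b)-(K-a)=a-b$. Hence $\varphi^\mathcal{N}_{S,T}-\varphi^\mathcal{N}_{T,S}=o(S)-o(T)$ in all cases, and this also holds trivially for $T=S$.

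Summing over $T$ gives
$$\Phi^\mathcal{N}(S)=N\,o(S)-\frac{KN}{2}=N\Big(o(S)-\frac{o(S)+i(S)}{2}\Big)=\frac{N}{2}\,C^\mathcal{N}(S)=2^{|X|-1}C^\mathcal{N}(S),$$
which is the claim.
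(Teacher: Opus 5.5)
Your key identity $\varphi^\mathcal{N}_{S,T}=\min\{o(S),i(T)\}$ is false for balanced networks, so the proof has a genuine gap at exactly the step you flagged.

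Take $X=\{1,2\}$ and the network of the single linear order $\{1,2\}\succ\{1\}\succ\{2\}\succ\varnothing$, so $c(S,T)=1$ iff $S$ is ranked above $T$, with balance $k=1$. Every arc points downward in the order, so no path leads from $\{2\}$ to $\{1\}$ and $\varphi^\mathcal{N}_{\{2\},\{1\}}=0$. Yet $o(\{2\})=i(\{1\})=1$. In your cut language, $\mathcal{U}=\{\{2\},\varnothing\}$ has $m=2$ and capacity $o(\{2\})+o(\varnothing)-k\binom{2}{2}=0$. This is strictly below both extreme cuts, so no exchange argument can place the minimum at $m=1$ or $m=N-1$. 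Similarly $\varphi^\mathcal{N}_{\{1\},\{2\}}=1<2=\min\{o(\{1\}),i(\{2\})\}$.

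Both values are off by one, which is why the differences still agree. The formula is also not the Bubboloni--Gori result. Their Proposition 16, which the paper invokes, gives only the difference identity $\varphi^\mathcal{N}_{S,T}-\varphi^\mathcal{N}_{T,S}=o(S)-o(T)$.

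You only ever use that difference, and your own cut formula proves it directly. For $S\neq T$, the map $\mathcal{U}\mapsto(\mathcal{U}\setminus\{S\})\cup\{T\}$ is a bijection from the cuts separating $S$ from $T$ onto those separating $T$ from $S$, and it preserves $m=|\mathcal{U}|$. In a balanced network the capacity $\sum_{U\in\mathcal{U}}o(U)-k\binom{m}{2}$ depends on $\mathcal{U}$ only through $m$ and the outdegree sum. Hence every cut capacity shifts by exactly $o(T)-o(S)$. Taking minima and applying max-flow min-cut gives $\varphi^\mathcal{N}_{T,S}=\varphi^\mathcal{N}_{S,T}+o(T)-o(S)$.

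Use this in place of your min identity and drop the case analysis on $a+b\le K$. Your final summation then goes through unchanged and is equivalent to the paper's. The paper instead rewrites the difference as $(C^\mathcal{N}(S)-C^\mathcal{N}(T))/2$ and uses that $C^\mathcal{N}$ has medium value $0$.
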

\begin{proof} Let $k\in \mathbb{N}_0$ be the balance of $\mathcal{N}$. Then, by \cite{bubboloni2018flow}[Proposition 16], for every $(S,T)\in A,$ we have 
\begin{equation}\label{paperflow}
\varphi^\mathcal{N}_{S,T} - \varphi^\mathcal{N}_{T,S}=o^\mathcal{N}(S)-o^\mathcal{N}(T),
\end{equation}
where $o^\mathcal{N}$ denotes the outdegree in the network $\mathcal{N}$.
Since $c(S,T)+c(T,S)=k$ holds for every $(S,T)\in A$, we have 
$$o^\mathcal{N}(S)=\sum_{T \in \PX\setminus\{S\}} c(S,T)=\sum_{T \in \PX\setminus\{S\}}(k- c(T,S))= k(2^{|X|}-1)-i^\mathcal{N}(S).$$
As a consequence, $C^\mathcal{N}(S)=o^\mathcal{N}(S)-i^\mathcal{N}(S)=2o^\mathcal{N}(S)-k(2^{|X|}-1).$ It follows that $o^\mathcal{N}(S)=\frac{C^\mathcal{N}(S)+k(2^{|X|}-1)}{2}$ and hence, by \eqref{paperflow}, for every $(S,T)\in A$, we have 
\begin{equation}\label{paperflow2}
\varphi^\mathcal{N}_{S,T} - \varphi^\mathcal{N}_{T,S}=\frac{C^\mathcal{N}(S)-C^\mathcal{N}(T)}{2}.
\end{equation}
Thus, for every $S\in 2^X$, recalling that $C^\mathcal{N}$ has medium value $0$, we get 
\begin{eqnarray*}
\Phi^\mathcal{N}(S)&=&\sum_{T \in \PX\setminus\{S\}} (\varphi^\mathcal{N}_{S,T} - \varphi^\mathcal{N}_{T,S})=\sum_{T \in \PX\setminus\{S\}}\frac{C^\mathcal{N}(S)-C^\mathcal{N}(T)}{2}\\
&=&\frac{1}{2}[C^\mathcal{N}(S)(2^{|X|}-1) -\sum_{T \in \PX\setminus\{S\}} C^\mathcal{N}(T)]=\frac{1}{2}[C^\mathcal{N}(S)(2^{|X|}-1) + C^\mathcal{N}(S)]\\
&=&2^{|X|-1}C^\mathcal{N}(S).
\end{eqnarray*}

\end{proof}

However, for other types of networks, the two pseudo-games $\Phi^\mathcal{N}$ and $C^\mathcal{N}$ are not proportional. This fact is apparent considering the network of Example \ref{ex:1}, as it is shown in the next Example \ref{ex:netflow}.

\begin{example}\label{ex:netflow}
{\rm Maximum flow values between each pair of coalitions in the coalitional network of Example \ref{ex:1} are illustrated in Table \ref{tab:ex3}.

\begin{table}[h!]
    \centering
     \begin{tabular}{|c|c|c|c|c|c|c|c|c|}
\hline 
$S,T$ & $\varnothing$ & 1 & 2 & 3 & 12 & 13 & 23 & 123 \\ 
\hline 
$\varnothing$ & $*$ & 0 & 0 & 0 & 0 & 0 & 0 & 0 \\ 
\hline 
1 & 0 & $*$ & 1 & 1 & 1 & 1 & 0 & 1 \\ 
\hline 
2 & 0 & 1 & $*$ & 1 & 1 & 1 & 0 & 1 \\ 
\hline 
3 & 0 & 0 & 0 & $*$ & 0 & 0 & 0 & 0 \\ 
\hline 
12 & 0 & 1 & 1 & 1 & $*$ & 1 & 0 & 1 \\ 
\hline 
13 & 0 & 1 & 1 & 1 & 2 & $*$ & 0 & 1 \\ 
\hline 
23 & 0 & 0 & 0 & 0 & 0 & 0 & $*$ & 0 \\ 
\hline 
123 & 0 & 1 & 1 & 1 & 1 & 1 & 0 & $*$ \\ 
\hline 
\end{tabular}
     \caption{Maximum flows $\varphi^\mathcal{N}_{S,T}$ for each pair of coalitions in the coalitional network in Figure \ref{fig:ex1}}
    \label{tab:ex3}
\end{table}

The corresponding Net-Flow pseudo-game $(X,\Phi^\mathcal{N})$  is as follows (sets' brackets are omitted):
\[
\begin{array}{l}
\Phi^\mathcal{N}(\varnothing)=0,\Phi^\mathcal{N}(1)=\Phi^\mathcal{N}(2)=5-4=1, \Phi^\mathcal{N}(3)=0-5=-5, \Phi^\mathcal{N}(1,2)=5-5=0,\\ \Phi^\mathcal{N}(1,3)=6-4=2, \Phi^\mathcal{N}(2,3)=0, \Phi^\mathcal{N}(1,2,3)=5-4=1.
\end{array}
\]
Clearly the set functions $\Phi^\mathcal{N}$ and $C^\mathcal{N}$, computed in Example \ref{ex:1}, are not proportional.}
\end{example}
 Coming back to Example \ref{ex:2}, Theorem \ref{coincidenza} says that if the coalitional network determined by the opinions expressed by the teachers is balanced, there is no need to debate about which pseudo-game is more suitable for judging the students between the Net-Outdegree and the Net-Flow, since they are equal. This surely happens if every teacher is giving a complete ranking of all the possible groups that can be formed (included the empty group). With regard to this problem, it is worth noting that comparing a non-empty group to the empty one could represent a benchmark against expected academic outcomes, historical group averages, or even the quality of work produced by students using AI-based generative tools, among other interpretations. However, the assumption that every teacher holds a complete and explicit opinion on all possible group comparisons is computationally unrealistic. The advantage of our framework is its flexibility: it can accommodate any type of information expressed through comparison functions, even highly incomplete ones. Moreover, as we will show in Section \ref{sec:competsem}, our framework becomes particularly computationally valuable in cases where the number of expressed comparisons is quite smaller than the number of coalitions involved.


\subsubsection*{The Net-Schulze pseudo-game}
 On the basis of the so called Schulze method introduced by Schulze in  \cite{schulze2011new}, we now define the Schulze comparison function.
 Let $S,T\in 2^X$ and denote by $\Gamma(\mathcal{N},S,T)$ the set of paths from $S$ to $T$ in $\mathcal{N}.$
 Define, for every $\gamma=S_1\cdots S_m, m\geq 2,$ path from $S=S_1$ to $T=S_m$,
\[
\delta (\gamma)=\mathrm{min}\{c(S_i,S_{i+1}): i\in[m-1]  \}.
\]
 Put $s^\mathcal{N}_{S,S}=0$ for all $S\in 2^X$ and, for every $(S,T)\in A$, 
\[
s^\mathcal{N}_{S,T}=\left\{
\begin{array}{ll}
\max\{\delta (\gamma):\gamma\in \Gamma(\mathcal{N},S,T)\}&\mbox{ if } \Gamma(\mathcal{N},S,T)\neq\varnothing\\
\vspace{-2mm}\\
0 &\mbox{ if } \Gamma(\mathcal{N},S,T)=\varnothing\\
\end{array}
\right.
\]
Let $s^\mathcal{N}:2^X\times 2^X\rightarrow \mathbb{N}_0 $ be the map associating to every $(S,T)\in 2^X\times 2^X$ the number $s^\mathcal{N}_{S, T}.$ We call $s^\mathcal{N}$ the Schulze comparison function associated with $\mathcal{N}.$
Note that, due to the definition of path, $s^\mathcal{N}_{S,T}=0$ if and only if $S=T$ or $S\neq T$ and there exists no path from $S$ to $T$ in $\mathcal{N}.$

We also emphasize that, clearly, the inequality $s^\mathcal{N}_{S,T}\leq \varphi^\mathcal{N}_{S,T}$ holds. However, the difference $\varphi^\mathcal{N}_{S,T}-s^\mathcal{N}_{S,T} $ can be arbitrarily large and the rationale behind the two numbers 
$\varphi^\mathcal{N}_{S,T}$ and $s^\mathcal{N}_{S,T}$ is quite different. In order to better understand this point, consider $n\in \mathbb{N}$ and  a network with capacity $1$ over arcs $(S,S_i), (S_i,T)$ for $i\in [n]$ and $0$ elsewhere. Then, we have $\varphi^\mathcal{N}_{S,T}=n$ while  $s^\mathcal{N}_{S,T}=1.$ This example makes evident that parallel independent clone paths between $S$ and $T$ do not influence the Schulze comparison function on $(S,T)$, but influence the flow comparison function on $(S,T)$.

For more properties and information about $s^\mathcal{N}$,  see Bubboloni and Gori \cite{bubboloni2018flow}[Section 10].

As usual, this comparison function produces a corresponding comparison pseudo-game, which we call the {\it Net-Schulze pseudo-game} $\mathfrak{S}^\mathcal{N}$ associated with $\mathcal{N}$. Explicitly, $\mathfrak{S}^\mathcal{N}: 2^X\rightarrow\mathbb{R}$ is given by
\[
\mathfrak{S}^\mathcal{N}(S)\coloneqq\sum_{T \in \PX} (s^\mathcal{N}_{S,T} - s^\mathcal{N}_{T,S}),
\]
for all $S \in \PX$. Note that $\displaystyle{\frac{\mathfrak{S}(S)}{2^{|X|}}}$  is the average of the maximum size of a pipe from $S$ to any other coalition minus the maximum size of a pipe  from any coalition to $S$.

Many applications of the Schulze  method appear in the literature, both theoretical (see, for instance, \cite{gori2023families}, \cite{gori2025some}) and practical (see, for instance, \cite{pereira2021ranking}, \cite{parkes2012complexity}). However, it seems that the Net-Schulze pseudo-game is a novelty.
   \begin{example}\label{ex:netschulze}{\rm
Consider again the coalitional network of Example \ref{ex:1}. 
Values $s^\mathcal{N}_{S,T}$, for all $S,T \in 2^{\{1,2,3\}}$,  are almost identical to the ones indicated in Table \ref{tab:ex3}, with the sole change being that the entry in position $(\{1,3\}, \{1,2\})$ is $1$ instead of $2$.
The corresponding Net-Schulze pseudo-game $(X,\mathfrak{S})$ is as follows (sets' brackets are omitted)
\[
\begin{array}{l}
\mathfrak{S}^\mathcal{N}(\varnothing)=0,\ \mathfrak{S}^\mathcal{N}(1)=\ \mathfrak{S}^\mathcal{N}(2)=5-4=1,\ \mathfrak{S}^\mathcal{N}(3)=0-5=-5,\\ \mathfrak{S}^\mathcal{N}(1,2)=5-4=1,\  \mathfrak{S}^\mathcal{N}(1,3)=5-4=1,\  \mathfrak{S}^\mathcal{N}(2,3)=0,\  \mathfrak{S}^\mathcal{N}(1,2,3)=5-4=1.
\end{array}
\]
}
\end{example}
Notice that the three set functions $C^\mathcal{N},\Phi^\mathcal{N}$ and $\mathfrak{S}^\mathcal{N}$
of Examples \ref{ex:1}, \ref{ex:netflow} and \ref{ex:netschulze}, respectively, differ not only in their numerical values but also in the ordinal rankings they induce over coalitions.

\section{Comparison pseudo-games and semivalues}\label{sec:competsem}

In this section we prove some properties of semivalues for comparison pseudo-games.  
We begin  showing that every semivalue is the semivalue of a comparison game.

\begin{proposition}\label{semi-comp} Let $v\in PG^X$ and let $w\in PG_0^X$
be 
the comparison game associated with $v$. Let $p=(p_s)_{s=0}^{|X|-1}$ be a probability vector and consider the semivalue $\psi^p$  with respect to $p$.  Then
$$\psi^p(v)=\psi^p(w).$$
\end{proposition}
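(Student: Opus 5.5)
The plan is to use the decomposition $v=w+v_{\overline v}$ from Section \ref{sec:comppseudo}, where $w=w(v)=v-v_{\overline v}\in PG_0^X$. By Theorem \ref{char0med}, $w$ is indeed a comparison pseudo-game. Since $\psi^p$ is linear, we get
$$\psi^p(v)=\psi^p(w)+\psi^p(v_{\overline v}).$$
So it suffices to show that every semivalue vanishes on constant pseudo-games.

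For this, fix $k\in\mathbb{R}$ and $i\in X$. By the definition of $\psi^p$, we have $\psi^p_i(v_k)=\sum_{S\subseteq X\setminus\{i\}}p_{|S|}\big(v_k(S\cup\{i\})-v_k(S)\big)$. Every marginal difference in this sum equals $k-k=0$, so $\psi^p_i(v_k)=0$ for every $i$. Taking $k=\overline v$ gives $\psi^p(v_{\overline v})=0$, and the identity $\psi^p(v)=\psi^p(w)$ follows.

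No step is a genuine obstacle. The only point needing care is that the semivalue is defined on all of $PG^X$, not only on games. Linearity of $\psi^p$ therefore applies to the constant pseudo-game $v_{\overline v}$ even though $v_{\overline v}(\varnothing)\neq 0$ in general. This is immediate from the definition, since $\psi^p$ is a fixed linear combination of the values of $v$. I would close by remarking what the result gives: by Theorem \ref{char0med}, every value $\psi^p(v)$ is attained by the comparison pseudo-game $w=\rho(f)$ with $f(S,T)=\frac{v(S)-v(T)}{2^{|X|+1}}$.
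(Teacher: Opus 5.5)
Your proof is correct and is essentially the paper's argument: decompose $v=w+v_{\overline v}$, apply linearity of $\psi^p$, and note that every marginal difference of a constant pseudo-game vanishes. Your closing remark is also right, since the computation in the proof of Theorem~\ref{char0med} gives $\rho(f)=v-v_{\overline v}=w$ for $f(S,T)=\frac{v(S)-v(T)}{2^{|X|+1}}$ even when $v$ itself does not have medium value $0$.
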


\begin{proof} 
We have $v=w+v_{\overline{v}}$ 
and, by linearity of $\psi^p$, we have $$\psi^p(v)=\psi^p(w+v_{\overline{v}})=\psi^p(w)+ \psi^p(v_{\overline{v}})=\psi^p(w),$$
because a semivalue of a constant pseudo-game is the constant function $0.$
\end{proof}

We now compute semivalues of comparison games in term of their comparison function. In order to carry on smoothly the proof we present an easy lemma which is of help also in other parts of the paper. We first define the concept of symmetric set. Let $Y$ be a set. A subset $A$ of $Y^2$ is called symmetric if $(x,y)\in A$ implies  $(y,x)\in A.$
\begin{lemma}\label{appoggio} Let $X$ be a finite set, $\mathcal{A}\subseteq (2^X)^2$ be symmetric and $G:\mathcal{A}\rightarrow \mathbb{R}$. Then the following equality holds 
\begin{equation*}
\sum_{(S,T)\in \mathcal{A}}G(S,T)=\frac{1}{2}\bigg(\sum_ {(S,T)\in \mathcal{A}}G(S,T)+\sum_{(S,T)\in \mathcal{A}} G(T,S)\Bigg).
\end{equation*}
\end{lemma}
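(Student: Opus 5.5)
The identity follows from the fact that the swap map $\sigma:(S,T)\mapsto (T,S)$ is a bijection of $\mathcal{A}$ onto itself. First we check this. By symmetry of $\mathcal{A}$, for every $(S,T)\in\mathcal{A}$ we have $\sigma(S,T)=(T,S)\in\mathcal{A}$, so $\sigma$ maps $\mathcal{A}$ into $\mathcal{A}$. Moreover $\sigma\circ\sigma$ is the identity, so $\sigma$ is an involution of $\mathcal{A}$ and hence a bijection $\mathcal{A}\to\mathcal{A}$. Since $X$ is finite, $\mathcal{A}$ is finite and all sums involved are finite, so reindexing is harmless.

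Next we reindex the sum of $G$ along $\sigma$:
\[
\sum_{(S,T)\in\mathcal{A}} G(T,S)=\sum_{(S,T)\in\mathcal{A}} G(\sigma(S,T))=\sum_{(U,W)\in\mathcal{A}} G(U,W).
\]
This equality requires that $G(T,S)$ be defined whenever $(S,T)\in\mathcal{A}$, which is exactly what symmetry of $\mathcal{A}$ guarantees. Writing $\Sigma:=\sum_{(S,T)\in\mathcal{A}}G(S,T)$, the right-hand side of the claimed equality is therefore $\tfrac12(\Sigma+\Sigma)=\Sigma$, which is the left-hand side.

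There is no genuine obstacle here. The only point needing care is to state explicitly that the swap is a well-defined bijection of $\mathcal{A}$, since without symmetry the expression $G(T,S)$ would not even make sense.
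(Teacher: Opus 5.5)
Your proof is correct and follows the paper's argument: the paper also observes that the swap $(S,T)\mapsto(T,S)$ is an involution of the symmetric set $\mathcal{A}$, hence a bijection. It then concludes that $\sum_{(S,T)\in\mathcal{A}}G(T,S)=\sum_{(S,T)\in\mathcal{A}}G(S,T)$, which immediately gives the stated identity.
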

\begin{proof} Since $\mathcal{A}$ is symmetric, we can define the function $f:\mathcal{A}\rightarrow \mathcal{A}$ associating, with every  $(S,T)\in \mathcal{A},\  f(S,T)=f(T,S)$. Since $f^2=id_{\mathcal{A}}$, $f$ is a bijection. As a consequence, we have 
\[\sum_ {(S,T)\in \mathcal{A}}G(S,T)=\sum_{(S,T)\in \mathcal{A}} G(T,S)
\]
 and thus 
 \begin{equation*}
\sum_{(S,T)\in \mathcal{A}}G(S,T)=\frac{1}{2}\bigg(\sum_ {(S,T)\in \mathcal{A}}G(S,T)+\sum_{(S,T)\in \mathcal{A}} G(T,S)\Bigg).
\end{equation*}   
\end{proof}
\begin{theorem}\label{prop:banz}
Let $(X, F)$ be the coalitional comparison pseudo-game on $X$ associated with the comparison function $f$. Let $p=(p_s)_{s=0}^{|X|-1}$ be a probability vector. Then, for every $i\in X$,  the semivalue $\psi^p(F)$ satisfies the following equality

$$\psi^p_i(F)=\sum_{S,T\in 2^{X\setminus\{i\}}}\left[(p_{|S|}+p_{|T|})\left( f_{S\cup\{i\},T} - f_{T, S\cup\{i\}}   \right) +\frac{p_{|S|}-p_{|T|}}{2}\left( f_{T,S} + f_{S\cup\{i\},T\cup\{i\} }  \right)\right]$$

In particular, for every $i\in X$,  the Banzhaf value  of  $(X,F)$  satisfies
\[
\beta_i(F)=\frac{1}{2^{|X|-2}} \sum_{S,T\in 2^{X\setminus \{i\}}} (f_{S\cup\{i\},T}- f_{T,S\cup\{i\}}).
\]
 
\end{theorem}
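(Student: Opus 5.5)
The plan is a direct expansion of the definition of $\psi^p$, followed by re-indexing each piece so that every term is a sum over pairs $(S,T)\in 2^{X\setminus\{i\}}\times 2^{X\setminus\{i\}}$. Fix $i\in X$ and write $\mathcal{A}\coloneqq 2^{X\setminus\{i\}}\times 2^{X\setminus\{i\}}$, which is symmetric in the sense of Lemma \ref{appoggio}. By definition,
$$\psi^p_i(F)=\sum_{S\in 2^{X\setminus\{i\}}}p_{|S|}\big(F(S\cup\{i\})-F(S)\big).$$
In both $F(S\cup\{i\})$ and $F(S)$, I will split the sum over $U\in 2^X$ according to whether $i\notin U$ or $i\in U$, that is, $U=T$ or $U=T\cup\{i\}$ with $T\in 2^{X\setminus\{i\}}$. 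This gives
$$\psi^p_i(F)=\sum_{(S,T)\in\mathcal{A}}p_{|S|}\Big[\big(f_{S\cup\{i\},T}-f_{T,S\cup\{i\}}\big)+\big(f_{T\cup\{i\},S}-f_{S,T\cup\{i\}}\big)+\big(f_{T,S}-f_{S,T}\big)+\big(f_{S\cup\{i\},T\cup\{i\}}-f_{T\cup\{i\},S\cup\{i\}}\big)\Big].$$

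Next I will treat the four brackets separately, using the swap $(S,T)\mapsto(T,S)$, which is a bijection of $\mathcal{A}$ (the mechanism behind Lemma \ref{appoggio}).

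\emph{Second bracket.} Relabelling $S\leftrightarrow T$ turns $\sum p_{|S|}\big(f_{T\cup\{i\},S}-f_{S,T\cup\{i\}}\big)$ into $\sum p_{|T|}\big(f_{S\cup\{i\},T}-f_{T,S\cup\{i\}}\big)$. Adding this to the first bracket yields the asymmetric part $(p_{|S|}+p_{|T|})\big(f_{S\cup\{i\},T}-f_{T,S\cup\{i\}}\big)$.

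\emph{Third and fourth brackets.} These are antisymmetric in $(S,T)$, so the same relabelling applies. For instance,
$$\sum p_{|S|}\big(f_{T,S}-f_{S,T}\big)=\sum\big(p_{|S|}-p_{|T|}\big)f_{T,S},$$
and likewise the fourth bracket becomes $\sum\big(p_{|S|}-p_{|T|}\big)f_{S\cup\{i\},T\cup\{i\}}$. Alternatively, averaging the sum with its swapped version, exactly as in Lemma \ref{appoggio}, gives the equivalent symmetrised form $\frac{p_{|S|}-p_{|T|}}{2}\big[(f_{T,S}-f_{S,T})+(f_{S\cup\{i\},T\cup\{i\}}-f_{T\cup\{i\},S\cup\{i\}})\big]$.

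The main obstacle is the bookkeeping of this symmetric part: its normalising constant depends on which of these two equivalent forms one writes. My quick pass gives coefficient $p_{|S|}-p_{|T|}$ in front of $f_{T,S}+f_{S\cup\{i\},T\cup\{i\}}$, and $\frac{p_{|S|}-p_{|T|}}{2}$ only in front of the antisymmetrised differences. So I would recheck carefully which form matches the stated factor $\frac{1}{2}$ before finalising. In any case this part is a pure reindexing and introduces no new ideas.

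Finally, for the Banzhaf value, $p_s=\frac{1}{2^{|X|-1}}$ for all $s$. Then $p_{|S|}-p_{|T|}=0$, so the symmetric part vanishes, whichever form is used. Also $p_{|S|}+p_{|T|}=\frac{2}{2^{|X|-1}}=\frac{1}{2^{|X|-2}}$, which gives the stated formula for $\beta_i(F)$ immediately.
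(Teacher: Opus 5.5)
Your four-bracket expansion and the reindexing under $(S,T)\mapsto(T,S)$ are correct. This is essentially the paper's route: the paper symmetrises the whole sum at once via Lemma \ref{appoggio}, while you swap bracket by bracket. The gap is at the end. You leave open which form ``matches the stated factor $\frac12$'', and the answer is that neither does. Your two forms are equal to each other, and both equal
\[
\sum_{S,T\in 2^{X\setminus\{i\}}}\big(p_{|S|}-p_{|T|}\big)\big(f_{T,S}+f_{S\cup\{i\},T\cup\{i\}}\big),
\]
which is \emph{twice} the second sum in the statement. This is not a normalisation convention. The statement attaches $\frac{p_{|S|}-p_{|T|}}{2}$ to the non-antisymmetrised expression $f_{T,S}+f_{S\cup\{i\},T\cup\{i\}}$, which gives exactly half the true value. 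So the general identity as stated is false whenever that sum is nonzero, and no rechecking will close your proof.

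Concretely, take $X=\{1,2,3\}$, the Shapley vector $p=(\frac13,\frac16,\frac13)$, $i=1$, and $f$ equal to $1$ at $(\{2\},\varnothing)$ and $0$ elsewhere. Then $F(\{2\})=1$, $F(\varnothing)=-1$, and $F$ vanishes elsewhere, so $\phi_1(F)=p_0-p_1=\frac16$. The right-hand side of the statement has a single nonzero term, at $(S,T)=(\varnothing,\{2\})$, equal to $\frac{p_0-p_1}{2}=\frac1{12}$. Your version gives the correct $\frac16$.

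So the right conclusion is that your argument proves the \emph{corrected} identity, with $p_{|S|}-p_{|T|}$ in place of $\frac{p_{|S|}-p_{|T|}}{2}$. The Banzhaf specialisation is unaffected, and your argument for it is fine, since $p$ is constant there.

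The paper's proof slips at the step justified ``by Lemma \ref{appoggio} and by elementary computations''. It equates $\sum(p_{|S|}-p_{|T|})\big(f_{T,S}-f_{S,T}+f_{S\cup\{i\},T\cup\{i\}}-f_{T\cup\{i\},S\cup\{i\}}\big)$ with $\sum(p_{|S|}-p_{|T|})\big(f_{T,S}+f_{S\cup\{i\},T\cup\{i\}}\big)$. But the weight $p_{|S|}-p_{|T|}$ is antisymmetric, so under the swap $-f_{S,T}$ contributes the same as $f_{T,S}$, and $-f_{T\cup\{i\},S\cup\{i\}}$ the same as $f_{S\cup\{i\},T\cup\{i\}}$. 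The right side must therefore be doubled.

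An internal check confirms your constant. Your third and fourth brackets, summed over $T$, are exactly $F_i^*(S\cup\{i\})-F_i^*(S)$, so they contribute $\psi^p_i(F_i^*)$. By the (correctly proved) first identity of Proposition \ref{prop:banz-root}, this is twice the statement's second sum. Accordingly, the final formula in that proposition should read $\psi^p_i(F)=[\cdots]+\psi^p_i(F_i^*)$. Example \ref{ex:simpleone} does not expose the error because the symmetric part vanishes there.
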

\begin{proof} By definition of $F$, for every $i\in X$, we have
\begin{eqnarray*}\psi^p_i(F)&=&\sum_{S\in 2^{X\setminus\{i\}}}p_{|S|}(F(S\cup\{i\})-F(S)) \\
&=&  \sum_{S\in 2^{X\setminus\{i\}}}p_{|S|}\sum_{T\in 2^X}\Big(f_{S\cup\{i\},T}-f_{T,S\cup\{i\}}-f_{S,T}+f_{T,S}\Big)         \\
&=&  \sum_{S\in 2^{X\setminus\{i\}}}\Bigg[p_{|S|}\sum_{T\in 2^{X\setminus\{i\}}}\Big(f_{S\cup\{i\},T}-f_{T,S\cup\{i\}}-f_{S,T}+f_{T,S}\Big)  \\
&+& \sum_{U\in 2^{X\setminus\{i\}}}\Big(f_{S\cup\{i\},U\cup\{i\}}-f_{U\cup\{i\},S\cup\{i\}}-f_{S,U\cup\{i\}}+f_{U\cup\{i\},S}\Big) \Bigg]          \\
&=& \sum_{S\in 2^{X\setminus\{i\}}}p_{|S|}\sum_{T\in 2^{X\setminus\{i\}}}\Big(f_{S\cup\{i\},T}-f_{T,S\cup\{i\}}-f_{S,T}+f_{T,S}\Big)\\
&+& f_{S\cup\{i\},T\cup\{i\}}-f_{T\cup\{i\},S\cup\{i\}}-f_{S,T\cup\{i\}}+f_{T\cup\{i\},S}\Big)             \\
&=&  \sum_{(S,T)\in (2^{X\setminus\{i\}})^2} p_{|S|}  \Big(f_{S\cup\{i\},T}-f_{T,S\cup\{i\}}-f_{S,T}+f_{T,S}\\
&+& f_{S\cup\{i\},T\cup\{i\}}-f_{T\cup\{i\},S\cup\{i\}}-f_{S,T\cup\{i\}}+f_{T\cup\{i\},S}\Big).         
\end{eqnarray*}
Observe that  the structure of $\psi^p_i(F)$ is of the form $\sum_{(S,T)\in (2^{X\setminus\{i\}})^2}G(S,T),$ for a suitable function $G:(2^{X\setminus\{i\}})^2\rightarrow \mathbb{R}$ and that the set $\mathcal{A}\coloneq (2^{X\setminus\{i\}})^2$ is symmetric.
As a consequence, by Lemma \ref{appoggio}, we have 
\begin{eqnarray*}\psi^p_i(F)&=& \frac{1}{2}\sum_{(S,T)\in (2^{X\setminus\{i\}})^2} p_{|S|} \Big(f_{S\cup\{i\},T}-f_{T,S\cup\{i\}}-f_{S,T}+f_{T,S}\\
&+& f_{S\cup\{i\},T\cup\{i\}}-f_{T\cup\{i\},S\cup\{i\}}-f_{S,T\cup\{i\}}+f_{T\cup\{i\},S}\Big)\\
&+&  p_{|T|}\Big(f_{T\cup\{i\},S}-f_{S,T\cup\{i\}}-f_{T,S}+f_{S,T}\\
&+& f_{T\cup\{i\},S\cup\{i\}}-f_{S\cup\{i\},T\cup\{i\}}-f_{T,S\cup\{i\}}+f_{S\cup\{i\},T}\Big)\\
&=&\frac{1}{2}\Bigg[ \sum_{(S,T)\in (2^{X\setminus\{i\}})^2}(p_{|S|}+p_{|T|})\Big( f_{S\cup\{i\},T}-f_{T,S\cup\{i\}}+f_{T\cup\{i\},S}-f_{S,T\cup\{i\}}\Big) \\
&+&(p_{|S|}-p_{|T|})\Big(f_{T,S}-f_{S,T}+f_{S\cup\{i\},T\cup\{i\}}- f_{T\cup\{i\},S\cup\{i\}}\Big)\Bigg].
\end{eqnarray*}
Now note that 
\begin{eqnarray*}&&\sum_{(S,T)\in (2^{X\setminus\{i\}})^2}(p_{|S|}+p_{|T|})\Big( f_{S\cup\{i\},T}-f_{T,S\cup\{i\}}+f_{T\cup\{i\},S}-f_{S,T\cup\{i\}}\Big)\\
&=&\sum_{(S,T)\in (2^{X\setminus\{i\}})^2}(p_{|S|}+p_{|T|})\Big( f_{S\cup\{i\},T}-f_{T,S\cup\{i\}}\Big)\\
&+& \sum_{(S,T)\in (2^{X\setminus\{i\}})^2}(p_{|S|}+p_{|T|})\Big(f_{T\cup\{i\},S}-f_{S,T\cup\{i\}}\Big)\\
&=&2 \sum_{(S,T)\in (2^{X\setminus\{i\}})^2}(p_{|S|}+p_{|T|})\Big( f_{S\cup\{i\},T}-f_{T,S\cup\{i\}}\Big).
\end{eqnarray*}
Note also that, by Lemma \ref{appoggio} and by elementary computations, we have 
\begin{eqnarray*}&&\sum_{(S,T)\in (2^{X\setminus\{i\}})^2}(p_{|S|}-p_{|T|})\Big(f_{T,S}-f_{S,T}+f_{S\cup\{i\},T\cup\{i\}}- f_{T\cup\{i\},S\cup\{i\}}\Big)\\
&=&\frac{1}{2}\sum_{(S,T)\in (2^{X\setminus\{i\}})^2}\big[(p_{|S|}-p_{|T|})\Big(f_{T,S}-f_{S,T}+f_{S\cup\{i\},T\cup\{i\}}- f_{T\cup\{i\},S\cup\{i\}}\Big)\\
&+&(p_{|T|}-p_{|S|})\Big(f_{S,T}-f_{T,S}+f_{T\cup\{i\},S\cup\{i\}}- f_{S\cup\{i\},T\cup\{i\}}\Big)\big]\\
&=&\sum_{(S,T)\in (2^{X\setminus\{i\}})^2}(p_{|S|}-p_{|T|})\Big(f_{T,S} + f_{S\cup\{i\},T\cup\{i\} }\Big).
\end{eqnarray*}
It follows that 
\begin{equation}\label{finale}
\psi^p_i(F)=\sum_{S,T\in 2^{X\setminus\{i\}}}(p_{|S|}+p_{|T|})\left( f_{S\cup\{i\},T} - f_{T, S\cup\{i\}}   \right) +\frac{(p_{|S|}-p_{|T|})}{2}\left( f_{T,S} + f_{S\cup\{i\},T\cup\{i\} }  \right).    
\end{equation}
For the Banzhaf value, we have $p_{|S|}=2^{-|X|+1}$ for all $S\in 2^{X\setminus\{i\}}$. Hence the expression in \eqref{finale} reduces to
\[
\beta_i(F)=\frac{1}{2^{|X|-2}} \sum_{S,T\in 2^{X\setminus \{i\}}} (f_{S\cup\{i\},T}- f_{T,S\cup\{i\}}).
\]
\end{proof}

Theorem \ref{prop:banz} puts in evidence that the semivalue associated with a comparison function $f\in\mathcal{C}(2^X)$ contains a somewhat expected part given by $$\sum_{S,T\in 2^{X\setminus\{i\}}}(p_{|S|}+p_{|T|})\left( f_{S\cup\{i\},T} - f_{T, S\cup\{i\}}   \right)$$
representing the contribution of player $i$ when it is added to only one of the two coalitions compared according to $f$, and another less intuitive part given by
\begin{equation}\label{mist}
\sum_{S,T\in 2^{X\setminus\{i\}}}  \frac{p_{|S|}-p_{|T|}}{2}\left( f_{T,S} + f_{S\cup\{i\},T\cup\{i\} }  \right).
\end{equation}

\begin{example}\label{ex:banzhafform}
{\rm 
We use the formula in Theorem \ref{prop:banz} to compute the Banzhaf value of game $(X,\Phi^\mathcal{N})$ of Example \ref{ex:1}.  Thanks to that formula, we fortunately do not need to explicitly compute the game for computing the Banzhaf value. It is enough to know the comparison function represented by the maximum flows $\varphi^\mathcal{N}_{S,T}$ in Table    \ref{tab:ex3}. 
We have that
\begin{eqnarray*}
\beta_1(\Phi^\mathcal{N})&=&\frac{1}{2}\sum_{T\in 2^{\{2,3\}}}\big(f_{\{1\},T}-f_{T,\{1\}}+ f_{\{1,2\},T}-f_{T,\{1,2\}}+ f_{\{1,3\},T}-f_{T,\{1,3\}}+ f_{\{1,2,3\},T}-f_{T,\{1,2,3\}}\big)\\
&=& \frac{1}{2}(2-1+2-1+2-1+2-1)=2,
\end{eqnarray*}
\begin{eqnarray*}
\beta_2(\Phi^\mathcal{N})&=&\frac{1}{2}\sum_{T\in 2^{\{1,3\}}}\big(f_{\{2\},T}-f_{T,\{2\}}+ f_{\{1,2\},T}-f_{T,\{1,2\}}+ f_{\{2,3\},T}-f_{T,\{2,3\}}+ f_{\{1,2,3\},T}-f_{T,\{1,2,3\}}\big)\\
&=& \frac{1}{2}(3-2+2-3+0-0+3-2)=1,
\end{eqnarray*}
\begin{eqnarray*}
\beta_3(\Phi^\mathcal{N})&=&\frac{1}{2}\sum_{T\in 2^{\{1,2\}}}\big(f_{\{3\},T}-f_{T,\{3\}}+ f_{\{1,3\},T}-f_{T,\{1,3\}}+ f_{\{2,3\},T}-f_{T,\{2,3\}}+ f_{\{1,2,3\},T}-f_{T,\{1,2,3\}}\big)\\
&=& \frac{1}{2}(0-2+4-3+0-0+3-3)=-1.
\end{eqnarray*}
}

\end{example}

\begin{example}\label{ex:simpleone}
{\rm
Let $X=\{1,2,3\}$ and consider a comparison function such that $f_{\{1\},\{2,3\}}=5$, $f_{\{2,3\},\{1\}}=3$, $f_{\{2,3\},\{1,2\}}=1$ and $f_{S,T}=0$ for all the others pairs of coalitions $(S,T)\in 2^X\times2^X$.
Following the scenario introduced in Example \ref{ex:2}, this $f$ could represent a situation where five teachers have a better opinion of group $\{1\}$ than of $\{2,3\}$, but three teachers have an opposite opinion, and only one consider $\{2,3\}$ better than $\{1,2\}$. We can compute the Banzhaf value of the corresponding pseudo game $(X,F)$ without calculating the explicit worth of each coalition and using the formula in  Theorem \ref{prop:banz} as follows:
\begin{eqnarray}
\beta_1(F)&=&\frac{1}{2}(f_{\{1\},\{2,3\}}-f_{\{2,3\},\{1\}}-f_{\{2,3\},\{1,2\}})=\frac{1}{2}(5-3-1)=\frac{1}{2},\\
\beta_2(F)&=&\frac{1}{2}(f_{\{2,3\},\{1\}}-f_{\{1\},\{2,3\}})=\frac{1}{2}(3-5)=-1,\\
\beta_3(F)&=&\frac{1}{2}(f_{\{2,3\},\{1\}}-f_{\{1\},\{2,3\}}+f_{\{2,3\},\{1,2\}})=\frac{1}{2}(3-5+1)=-\frac{1}{2}.
\end{eqnarray}
To compute the Shapley value, we must consider the probability vector such that $p_0=p_2=\frac{1}{3}$ and $p_1=\frac{1}{6}$. Then, using again the formula in  Theorem \ref{prop:banz} with such probability vector $p$ we have the following Shapley value of the comparison pseudo-game $(X,F)$: 
\begin{eqnarray*}
\phi_1(F)&=&(\frac{1}{3}+\frac{1}{3})(f_{\{1\},\{2,3\}}-f_{\{2,3\},\{1\}})-(\frac{1}{3}+\frac{1}{6})f_{\{2,3\},\{1,2\}}=\frac{2}{3}(5-3)-\frac{1}{2}=\frac{5}{6},\\
\phi_2(F)&=&(\frac{1}{3}-\frac{1}{3})f_{\{2,3\},\{1,2\}}+(\frac{1}{3}+\frac{1}{3})(f_{\{2,3\},\{1\}}-f_{\{1\},\{2,3\}})=\frac{2}{3}(3-5)=-\frac{2}{3},\\
\phi_3(F)&=&(\frac{1}{6}+\frac{1}{6})(f_{\{2,3\},\{1\}}-f_{\{1\},\{2,3\}})+(\frac{1}{3}+\frac{1}{6})f_{\{2,3\},\{1,2\}}=\frac{1}{3}(3-5)+\frac{1}{2}=-\frac{1}{6}.
\end{eqnarray*}
Notice that, as expected by the efficiency of the Shapley value, $\phi_1(F)+\phi_2(F)+\phi_3(F)=0=F(X)$.
}
\end{example}

In order to better understand \eqref{mist}, we introduce a new object, which is interesting in itself.
\begin{definition}\label{rooted}{\rm Let $f\in\mathcal{C}(2^X)$. For every fixed $i\in X,$ we define the  pseudo-game  $(X, F_i^*)\in PG^X$ given, for every $S\in 2^X$,  by 
\[
F_i^*(S)\coloneqq\left\{
\begin{array}{ll}
\displaystyle{\sum_{T \in 2^{X\setminus\{i\}}} f_{S,T} - f_{T,S}}&\mbox{ if } i\notin S\\
\vspace{0mm}\\
\displaystyle{\sum_{T \in 2^{X\setminus\{i\}}} f_{S,T\cup\{i\}} - f_{T\cup\{i\},S}} &\mbox{ if } i\in S\\
\end{array}
\right.
\]
We call the pseudo-game $F_i^*$ the {\it $i$-rooted coalitional pseudo-game} induced by $f$.}
\end{definition}

$F_i^*$ is constructed within the idea of a comparison among subsets of $X$ in the same position with respect to the fixed  player $i$. Indeed sets containing $i$ are compared through $f$ with all the sets containing $i$, and sets not containing $i$ are compared through $f$  with all the sets not containing $i$.

Note that there are $|X|$ different rooted pseudo-games induced by a comparison function on $2^X.$

\begin{proposition}\label{prop:banz-root}
 Let  $f\in \mathcal{C}(2^X) $ and $p$ be a probability vector. For each player $i \in X$ we have
$$\sum_{S,T\in 2^{X\setminus\{i\}}}  \frac{p_{|S|}-p_{|T|}}{2}\left( f_{T,S} + f_{S\cup\{i\},T\cup\{i\} }  \right)=\frac{1}{2}\psi^p_i(F^*_i).$$

Thus, for $F=\rho(f)$, we have 
$$\psi^p_i(F)=\left[\sum_{S,T\in 2^{X\setminus\{i\}}}(p_{|S|}+p_{|T|})\left( f_{S\cup\{i\},T} - f_{T, S\cup\{i\}}   \right)\right] + \frac{1}{2}\psi^p_i(F^*_i).$$
\end{proposition}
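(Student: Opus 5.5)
The plan is to compute $\psi^p_i(F^*_i)$ directly from Definition~\ref{rooted} and then reshape the resulting double sum with Lemma~\ref{appoggio}, using the same symmetrization trick as in the proof of Theorem~\ref{prop:banz}. The second displayed formula will then follow at once by substituting the first identity into Theorem~\ref{prop:banz}.

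\textbf{Step 1: expand the semivalue.} By definition of the semivalue,
$\psi^p_i(F^*_i)=\sum_{S\in 2^{X\setminus\{i\}}}p_{|S|}\big(F^*_i(S\cup\{i\})-F^*_i(S)\big)$.
Here $S\cup\{i\}$ contains $i$ and $S$ does not, so each uses a different branch of Definition~\ref{rooted}. Substituting both branches gives
\[
\psi^p_i(F^*_i)=\sum_{(S,T)\in (2^{X\setminus\{i\}})^2}p_{|S|}\Big(f_{S\cup\{i\},T\cup\{i\}}-f_{T\cup\{i\},S\cup\{i\}}-f_{S,T}+f_{T,S}\Big).
\]

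\textbf{Step 2: symmetrize.} The index set $(2^{X\setminus\{i\}})^2$ is symmetric, so Lemma~\ref{appoggio} applies. Averaging the summand with its $(S,T)\leftrightarrow(T,S)$ swap gives
\[
\psi^p_i(F^*_i)=\frac12\sum_{(S,T)}(p_{|S|}-p_{|T|})\Big(f_{S\cup\{i\},T\cup\{i\}}+f_{T,S}-f_{T\cup\{i\},S\cup\{i\}}-f_{S,T}\Big).
\]

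\textbf{Step 3: absorb the negative terms.} I split off the two negative terms and apply the bijection $(S,T)\mapsto(T,S)$ to that part alone. This turns $(p_{|S|}-p_{|T|})(-f_{T\cup\{i\},S\cup\{i\}}-f_{S,T})$ into $(p_{|T|}-p_{|S|})(-f_{S\cup\{i\},T\cup\{i\}}-f_{T,S})$, which equals $(p_{|S|}-p_{|T|})(f_{S\cup\{i\},T\cup\{i\}}+f_{T,S})$. The negative part therefore coincides with the positive part, and the factor $\frac12$ cancels:
\[
\psi^p_i(F^*_i)=\sum_{S,T\in 2^{X\setminus\{i\}}}(p_{|S|}-p_{|T|})\big(f_{T,S}+f_{S\cup\{i\},T\cup\{i\}}\big).
\]
Dividing by $2$ yields exactly the first claimed identity. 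This sign and reindexing bookkeeping is the only delicate point; the rest is routine.

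\textbf{Step 4: conclude.} Substituting the first identity for the term \eqref{mist} in the formula of Theorem~\ref{prop:banz} gives the stated decomposition of $\psi^p_i(F)$.
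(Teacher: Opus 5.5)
Your Steps 1--3 are correct and prove the first identity. They use the same ingredients as the paper (Definition~\ref{rooted} and the symmetrization of Lemma~\ref{appoggio}), run from the other end: the paper starts from the left-hand side, splits it into the $f_{T,S}$-part and the $f_{S\cup\{i\},T\cup\{i\}}$-part, and recognizes these as $-\frac12\sum_{S}p_{|S|}F^*_i(S)$ and $\frac12\sum_{S}p_{|S|}F^*_i(S\cup\{i\})$.

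The problem is Step~4. Like the last line of the paper's own proof, it imports the coefficient $\frac{p_{|S|}-p_{|T|}}{2}$ from Theorem~\ref{prop:banz}, and that coefficient is wrong. In the proof of Theorem~\ref{prop:banz}, the ``elementary computation''
\[
\sum_{S,T}(p_{|S|}-p_{|T|})\big(f_{T,S}-f_{S,T}+f_{S\cup\{i\},T\cup\{i\}}-f_{T\cup\{i\},S\cup\{i\}}\big)=\sum_{S,T}(p_{|S|}-p_{|T|})\big(f_{T,S}+f_{S\cup\{i\},T\cup\{i\}}\big)
\]
drops a factor $2$; your own Step~3 shows precisely that the left side is twice the right side. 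Hence the second display of the proposition can hold for all $f$ only when $p$ is constant, i.e.\ for the Banzhaf value, where $\psi^p_i(F^*_i)=0$ and the factor is irrelevant.

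A minimal counterexample: take $X=\{1,2\}$, $i=1$, $p=(1,0)$, $f_{\{2\},\varnothing}=1$ and $f=0$ elsewhere. Then $F(\varnothing)=-1$, $F(\{2\})=1$, $F(\{1\})=F(\{1,2\})=0$, and $F^*_1=F$. The bracketed sum is $0$, since each of its terms evaluates $f$ at a pair having an argument that contains $1$, where $f$ vanishes. So the claimed formula reads $\psi^p_1(F)=\frac12\psi^p_1(F)$, whereas $\psi^p_1(F)=p_0-p_1=1$. For $|X|=3$, the Shapley value and the same $f$, one gets $\phi_1(F)=\frac16$ against the claimed $\frac1{12}$.

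What is true is
\[
\psi^p_i(F)=\sum_{S,T\in 2^{X\setminus\{i\}}}(p_{|S|}+p_{|T|})\big(f_{S\cup\{i\},T}-f_{T,S\cup\{i\}}\big)+\psi^p_i(F^*_i),
\]
that is, Theorem~\ref{prop:banz} with $p_{|S|}-p_{|T|}$ in place of $\frac{p_{|S|}-p_{|T|}}{2}$. You can prove this without Theorem~\ref{prop:banz}:
\begin{enumerate}
\item Write $F=F^*_i+H_i$, where $H_i(S)$ collects the comparisons of $S$ with the coalitions on the other side of $i$: those containing $i$ if $i\notin S$, those not containing $i$ if $i\in S$.
\item By linearity, $\psi^p_i(F)=\psi^p_i(F^*_i)+\psi^p_i(H_i)$.
\item Set $g(S,T)\coloneqq f_{S\cup\{i\},T}-f_{T,S\cup\{i\}}$. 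A direct expansion gives $\psi^p_i(H_i)=\sum_{S,T}p_{|S|}\big(g(S,T)+g(T,S)\big)$, and one swap $(S,T)\mapsto(T,S)$ turns this into $\sum_{S,T}(p_{|S|}+p_{|T|})\,g(S,T)$.
\end{enumerate}
So your own computations are sound. What fails is the appeal to Theorem~\ref{prop:banz}, and the proposition's second formula needs the factor $\frac12$ removed.
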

\begin{proof} We first deal with the number $$A\coloneqq \sum_{S,T\in 2^{X\setminus\{i\}}}  \frac{p_{|S|}-p_{|T|}}{2} f_{T,S}. $$
Using Lemma \ref{appoggio} and Definition \ref{rooted}, we have
\begin{eqnarray*} A&=&\frac{1}{4} \sum_{S,T\in 2^{X\setminus\{i\}}}(p_{|S|}-p_{|T|})f_{T,S}+(p_{|T|}-p_{|S|})f_{S,T}\\
&=&\frac{1}{4} \sum_{S,T\in 2^{X\setminus\{i\}}} p_{|S|}(f_{T,S}-f_{S,T})+p_{|T|}(f_{S,T}-f_{T,S})\\
&=&\frac{1}{4} \left[\left(\sum_{S\in 2^{X\setminus\{i\}}} p_{|S|}\sum_{ T\in 2^{X\setminus\{i\}}}f_{T,S}-f_{S,T}\right)   +\left(\sum_{T\in 2^{X\setminus\{i\}}} p_{|T|}\sum_{S\in 2^{X\setminus\{i\}}}(f_{S,T}-f_{T,S})\right)   \right]\\
&=&\frac{1}{4}\left[-\sum_{S\in 2^{X\setminus\{i\}}}p_{|S|}F_i^*(S)-\sum_{T\in 2^{X\setminus\{i\}}} p_{|T|}F_i^*(T)\right]\\
&=& -\frac{1}{2}\sum_{S\in 2^{X\setminus\{i\}}}p_{|S|}F_i^*(S).
\end{eqnarray*}
We next consider the number $$B\coloneqq \sum_{S,T\in 2^{X\setminus\{i\}}}  \frac{p_{|S|}-p_{|T|}}{2}f_{S\cup\{i\},T\cup\{i\} }. $$ Arguing similarly as before we get
\begin{eqnarray*} B&=&\frac{1}{4} \sum_{S,T\in 2^{X\setminus\{i\}}}(p_{|S|}-p_{|T|})f_{S\cup\{i\}, T\cup\{i\}}+(p_{|T|}-p_{|S|})f_{T\cup\{i\}, S\cup\{i\}}\\
&=&\frac{1}{4} \sum_{S,T\in 2^{X\setminus\{i\}}}p_{|S|}(f_{S\cup\{i\}, T\cup\{i\}}-f_{T\cup\{i\}, S\cup\{i\}})+p_{|T|}(f_{T\cup\{i\}, S\cup\{i\}}-f_{S\cup\{i\}, T\cup\{i\}})\\
&=& \frac{1}{4} \Bigg[\left(\sum_{S\in 2^{X\setminus\{i\}}} p_{|S|}\sum_{ T\in 2^{X\setminus\{i\}}} (f_{S\cup\{i\}, T\cup\{i\}}-f_{T\cup\{i\}, S\cup\{i\}})\right) \\
&+&\left(\sum_{ T\in 2^{X\setminus\{i\}}}p_{|T|} \sum_{S\in 2^{X\setminus\{i\}}}(f_{T\cup\{i\}, S\cup\{i\}}-f_{S\cup\{i\}, T\cup\{i\}})\right)\Bigg]
\\
&=&  \frac{1}{4}\left[\sum_{S\in 2^{X\setminus\{i\}}} p_{|S|}F_i^*(S\cup\{i\})+\sum_{T\in 2^{X\setminus\{i\}}} p_{|T|}F_i^*(T\cup\{i\}) \right]\\
&=&  \frac{1}{2}\sum_{S\in 2^{X\setminus\{i\}}} p_{|S|}F_i^*(S\cup\{i\}).
\end{eqnarray*}
It follows that 
$$\sum_{S,T\in 2^{X\setminus\{i\}}}  \frac{p_{|S|}-p_{|T|}}{2}\left( f_{T,S} + f_{S\cup\{i\},T\cup\{i\} }  \right)=A+B=\frac{1}{2}\psi^p_i(F^*_i).$$
The formula for $\psi^p_i(F)$ follows now immediately from Theorem \ref{prop:banz}.
\end{proof}

\begin{proposition}\label{prop:Fizeroval}
Let  $f\in \mathcal{C}(2^X) $ and $i \in X$. Then $F^*_i$ is a comparison game.  
\end{proposition}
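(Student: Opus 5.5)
The plan is to exhibit $F_i^*$ directly as $\rho(g)$ for a suitable comparison function $g\in\mathcal{C}(2^X)$. Once this is done, $F_i^*$ is a comparison pseudo-game by Definition \ref{rho}. As a cross-check I will also verify the equivalent condition of Theorem \ref{char0med}, namely that $\overline{F_i^*}=0$.

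\textbf{The construction.} Fix $i\in X$. For $(S,T)\in 2^X\times 2^X$, set $g(S,T)\coloneqq f(S,T)$ if $S$ and $T$ are in the same position with respect to $i$, that is, either $i\notin S\cup T$ or $i\in S\cap T$. Otherwise set $g(S,T)\coloneqq 0$. This simply formalizes the remark after Definition \ref{rooted}: $F_i^*$ only compares sets lying on the same side of $i$.

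\textbf{Checking $\rho(g)=F_i^*$.} I will compute $\rho(g)(S)=\sum_{T\in 2^X}(g_{S,T}-g_{T,S})$ in two cases.
\begin{itemize}
\item If $i\notin S$, every $T$ containing $i$ contributes $g_{S,T}=g_{T,S}=0$. What remains is $\sum_{T\in 2^{X\setminus\{i\}}}(f_{S,T}-f_{T,S})$, which is $F_i^*(S)$.
\item If $i\in S$, only the $T$ containing $i$ survive. Writing $T=U\cup\{i\}$ with $U\in 2^{X\setminus\{i\}}$ gives $\sum_{U\in 2^{X\setminus\{i\}}}(f_{S,U\cup\{i\}}-f_{U\cup\{i\},S})=F_i^*(S)$.
\end{itemize}
Hence $F_i^*=\rho(g)\in\rho(\mathcal{C}(2^X))$.

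\textbf{Cross-check via the medium value.} Split $\sum_{S\in 2^X}F_i^*(S)$ according to whether $i\in S$.
\begin{itemize}
\item The part with $i\notin S$ is $\sum_{(S,T)\in(2^{X\setminus\{i\}})^2}(f_{S,T}-f_{T,S})$. This vanishes by Lemma \ref{appoggio}, applied to the symmetric set $(2^{X\setminus\{i\}})^2$ and the antisymmetric function $G(S,T)=f_{S,T}-f_{T,S}$.
\item The part with $i\in S$ becomes the same kind of antisymmetric double sum after the reindexing $S=U\cup\{i\}$. It vanishes for the same reason.
\end{itemize}
So $F_i^*\in PG_0^X$, as Theorem \ref{char0med} requires.

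\textbf{Expected obstacle.} There is no real computational obstacle; the only delicate point is terminological. The statement says ``comparison game,'' but in general $F_i^*(\varnothing)=\sum_{T\in 2^{X\setminus\{i\}}}(f_{\varnothing,T}-f_{T,\varnothing})$ need not be $0$. The claim must therefore be read in the sense used throughout this section, where the text also calls $w(v)$ a ``comparison game'': $F_i^*$ is a comparison pseudo-game. Should a genuine game be intended, the extra condition $\sum_{T\in 2^{X\setminus\{i\}}}f_{\varnothing,T}=\sum_{T\in 2^{X\setminus\{i\}}}f_{T,\varnothing}$ would be needed.
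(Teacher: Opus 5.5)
Your proof is correct, but your main argument takes a different route from the paper's.

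The paper's proof is exactly what you present as a cross-check. It splits $\sum_{S\in 2^X}F_i^*(S)$ according to whether $i\in S$, notes that each of the two double sums cancels, and then invokes $(ii)\Rightarrow(i)$ of Theorem~\ref{char0med}. The generating function this produces implicitly is the generic $g'(S,T)=\big(F_i^*(S)-F_i^*(T)\big)/2^{|X|+1}$, which has no visible relation to $f$.

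Your explicit preimage $g$ is $f$ restricted to pairs of coalitions on the same side of $i$, and zero elsewhere. It needs only Definition~\ref{rho}, so your argument does not depend on Theorem~\ref{char0med} at all. The identity $\overline{F_i^*}=0$ then follows from the trivial direction $(i)\Rightarrow(ii)$, so your cross-check is redundant, though harmless.

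Your route is also more informative: it realizes $F_i^*$ as the comparison pseudo-game of a masked version of $f$. For instance, when $f$ is the capacity $c$ of a network $\mathcal{N}$, $F_i^*$ is the Net-Outdegree pseudo-game of the network obtained from $\mathcal{N}$ by setting to $0$ the capacity of every arc between a coalition containing $i$ and one not containing $i$. The paper's route is shorter only because the characterization theorem is already available.

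Your terminological remark is accurate. $F_i^*(\varnothing)=\sum_{T\in 2^{X\setminus\{i\}}}(f_{\varnothing,T}-f_{T,\varnothing})$ need not vanish when $|X|\geq 2$, and the paper's own proof only establishes $F_i^*\in PG_0^X$. So ``comparison game'' is used there, as elsewhere in that section, in the loose sense of comparison pseudo-game.
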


\begin{proof}
Let $n=|X|$. Then we have

\begin{eqnarray*}
\overline{F_i^*} & =& \frac{1}{2^n}\sum_{S\in 2^{X}} F_i^*(S)\\
&=& \frac{1}{2^n}\big( \sum_{S\in 2^{X\setminus \{i\}}} F_i^*(S)+ \sum_{S\in 2^{X\setminus \{i\}}} F_i^*(S \cup \{i\}) \big)\\
&=& \frac{1}{2^n}\big( \sum_{S\in 2^{X\setminus \{i\}}} \sum_{T \in 2^{X\setminus \{i\}}}  (f_{S,T} - f_{T,S})\\
&&+  \sum_{S\in 2^{X\setminus \{i\}}} \sum_{T \in 2^{X\setminus \{i\}}} f_{S\cup \{i\},T\cup \{i\}} - f_{T\cup \{i\},S\cup \{i\}}\big) \\
&=& \frac{1}{2^n} \sum_{S,T\in 2^{X \setminus \{i\}}}( f_{S,T} - f_{T,S}+  f_{S\cup \{i\},T\cup \{i\}} - f_{T\cup \{i\},S\cup \{i\}}) \\
&=& \frac{1}{2^n} \big( \sum_{S,T\in 2^{X \setminus \{i\}}} f_{S,T} - \sum_{S,T\in 2^{X \setminus \{i\}}} f_{T,S}\\
&&+  \sum_{S,T\in 2^{X \setminus \{i\}}} f_{S\cup \{i\},T\cup \{i\}} - \sum_{S,T\in 2^{X \setminus \{i\}}} f_{T\cup \{i\},S\cup \{i\}} \big)\\
&=&0.
\end{eqnarray*}
Since the medium value  of $F_i^*$ is $0,$ by Theorem \ref{char0med}, we deduce that $F^*_i$ is a comparison game.
\end{proof}
\section{Semivalues and scores}\label{sec:score}

An intuitive method to evaluate a player $i$'s overall performance across all the coalitions $i$ can form, is to average the worth of the coalitions containing $i$. The following proposition shows that this average is strongly related to the Banzhaf value of a pseudo-game.

\begin{proposition} \label{banzh}  Let $v\in PG^X$ . Then, for every $i \in X$,
\[
\beta_i(v)=\sum_{S \in \PX:i\in S} \frac{v(S)}{2^{|X|-2}}-2\overline{v}.
\]
Moreover, for every $i,j\in X,$
\[
\beta_i(v)-\beta_j(v)=\sum_{S \in \PX:i\in S} \frac{v(S)}{2^{|X|-2}}-\sum_{S \in \PX:j\in S} \frac{v(S)}{2^{|X|-2}}.
\]

\end{proposition}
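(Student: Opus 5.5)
Write $n\coloneqq |X|$ and start from the definition of the Banzhaf value, $\beta_i(v)=\frac{1}{2^{n-1}}\sum_{S\in 2^{X\setminus\{i\}}}\big(v(S\cup\{i\})-v(S)\big)$. The map $S\mapsto S\cup\{i\}$ is a bijection from $2^{X\setminus\{i\}}$ onto the coalitions containing $i$. Hence the first part of the sum equals $\sum_{S\in 2^X: i\in S} v(S)$, and the second part equals $\sum_{S\in 2^X: i\notin S} v(S)$.

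Next, rewrite the sum over coalitions not containing $i$ using the medium value. Since $2^X$ is the disjoint union of the coalitions containing $i$ and those not containing $i$, we have
\[
\sum_{S\in 2^X: i\notin S} v(S)=2^n\overline{v}-\sum_{S\in 2^X: i\in S} v(S).
\]
Substituting gives
\[
\beta_i(v)=\frac{1}{2^{n-1}}\Big(2\sum_{S\in 2^X: i\in S} v(S)-2^n\overline{v}\Big)=\sum_{S\in 2^X: i\in S}\frac{v(S)}{2^{n-2}}-2\overline{v},
\]
which is the first formula.

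For the second formula, write the first identity for $i$ and for $j$ and subtract. The term $-2\overline{v}$ does not depend on the player, so it cancels. Alternatively, one could start from Lemma \ref{lem:semidiff} with $p_s=2^{-(n-1)}$, but the subtraction is immediate.

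No step is a real obstacle; the only care needed is tracking the normalizing powers of $2$, namely that $\frac{2}{2^{n-1}}=\frac{1}{2^{n-2}}$ and $\frac{2^n}{2^{n-1}}=2$.
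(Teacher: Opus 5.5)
Your proposal is correct and follows essentially the same route as the paper. Both write the Banzhaf sum as the sum of $v$ over coalitions containing $i$ minus the sum over those not containing $i$, complete the second sum to $\sum_{S\in 2^X}v(S)=2^{|X|}\overline{v}$, and track the powers of $2$. The difference formula then follows by subtraction, since $-2\overline{v}$ does not depend on the player.
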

\begin{proof} 
 For every $i \in X$, we have

 \begin{eqnarray*}
      \beta_i(v)&=&   \frac{1}{2^{|X|-1}}\sum_{S \in \PX: i \in S}\big(v(S)-v(S\setminus \{i\})\big)\\
      &=&   \frac{1}{2^{|X|-1}}\Big(\sum_{S \in \PX: i \in S}v(S)-\sum_{S \in \PX: i \in S}v(S\setminus \{i\})\Big)\\
      &=&   \frac{1}{2^{|X|-1}}\Big(2\sum_{S \in \PX: i \in S}v(S)-\sum_{S \in \PX: i \in S}v(S)-\sum_{S \in \PX: i \in S}v(S\setminus \{i\})\Big)\\
      &=&  \frac{1}{2^{|X|-2}}\sum_{S \in \PX: i \in S} v(S)-\frac{1}{2^{|X|-1}}\sum_{S \in \PX} v(S)\\
      &=&\sum_{S \in \PX: i \in S} \frac{v(S)}{2^{|X|-2}}-2\overline{v}. 
 \end{eqnarray*}
 
\end{proof}

In the remainder of this section, inspired by the previous result, we introduce the notion of {\it score} of  a player with respect to a pseudo-game.

\begin{definition}\label{def:scoring2}
{\rm Let $v \in PG^X$ and let $q=(q_s)_{s=1}^{n}\in \mathbb{R}^{n}$, where $n=|X|$. We define the score $sc_q^v(i)$ of each player $i \in X$  in the pseudo-game $v$ by
\[
sc_q^v (i)\coloneqq\sum_{S \in \PX: i \in S} v(S)q_{|S|}
\]
This establishes a function $sc_q^v: X\rightarrow \mathbb{R}$ and thus a corresponding order on $X$.} 
\end{definition}
Interestingly, we are defining scoring depending on a parameter $q$ varying in the infinite set $\mathbb{R}^{n}$. 

 \begin{example}\label{bash}
{\rm The choice $q_s=\frac{1}{2^{n-2}}$ for all $s\in[n]$, where $n=|X|,$ leads to 
$$sc_q^v(i)=\sum_{S \in \PX: i \in S} \frac{v(S)}{2^{n-2}},$$
which realizes
$$sc_q^v(i)-sc_q^v(j)=\beta_i(v)-\beta_j(v).$$

In particular, $sc_q^v$ induces on $X$ the same order relation of the Banzhaf value. Notice that the score 
$sc^v\coloneq sc_{\bf 1}^v$,  where ${\bf 1}=(1)_{s=1}^{n}\in \mathbb{R}^{n}$,  induces on $X$ the same order relation of the Banzhaf value.} 
\end{example}

 We show now that, given a  probability vector $p$ there always exists $q\in \mathbb{R}^{n}$ such that, for every $i,j\in X$, $$sc_q^v(i)-sc_q^v(j)=\psi_i^p(v)-\psi_j^p(v).$$
\begin{theorem}\label{scores-semivalue} Let $X$ be a set of size $n\in \mathbb{N}$ and $p$ be a probability vector with respect to $X$.
 Then, there exists $q=(q_s)_{s=1}^{n}\in \mathbb{R}^{n}$ such that, for every  $v\in PG^X$ and every $i,j\in X$, we have 
\begin{equation}\label{todo}
 sc_q^v(i)-sc_q^v(j)=\psi_i^p(v)-\psi_j^p(v).   
\end{equation}
The only choice for $q$ which guarantees \eqref{todo} is given by $q=(q_s)_{s=1}^{n}$ with $q_s=p_{s-1}+p_s$ for $s\in[n-1]$ and $q_n\in \mathbb{R}$ arbitrary. 
If $q$ satisfies \eqref{todo}, then the functions  $sc_q^v$ and  $\psi^p(v)$ induces the same order on $X.$
\end{theorem}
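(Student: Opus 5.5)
We rely on Lemma \ref{lem:semidiff} and compute the score difference directly. For $i\neq j$, every coalition containing $i$ either contains $j$ or not. Coalitions containing both cancel in $sc_q^v(i)-sc_q^v(j)$, so
\[
sc_q^v(i)-sc_q^v(j)=\sum_{T\in 2^{X\setminus\{i,j\}}} q_{|T|+1}\big(v(T\cup\{i\})-v(T\cup\{j\})\big).
\]
This holds for any $q$, and $q_n$ never appears because $|T|+1\leq n-1$. Comparing with Lemma \ref{lem:semidiff}, the choice $q_s=p_{s-1}+p_s$ for $s\in[n-1]$, with $q_n$ arbitrary, gives \eqref{todo} for all $v$ and all $i,j$. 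The case $i=j$ is trivial, and for $n=1$ there is nothing to prove.

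For uniqueness, assume $q$ satisfies \eqref{todo} for every $v\in PG^X$. Then
\[
\sum_{T\in 2^{X\setminus\{i,j\}}}\big(q_{|T|+1}-p_{|T|}-p_{|T|+1}\big)\big(v(T\cup\{i\})-v(T\cup\{j\})\big)=0
\]
for all $v$. Fix $i\neq j$, which requires $n\geq 2$, and $s\in[n-1]$. Pick $T_0\subseteq X\setminus\{i,j\}$ with $|T_0|=s-1$; this is possible since $s-1\leq n-2$. Let $v$ be the indicator of $\{T_0\cup\{i\}\}$, so $v(T_0\cup\{i\})=1$ and $v=0$ elsewhere. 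For $T\subseteq X\setminus\{i,j\}$ we have $T\cup\{j\}\neq T_0\cup\{i\}$, so only $T=T_0$ contributes. The identity then gives $q_s=p_{s-1}+p_s$. Since $q_n$ is never involved, it stays arbitrary.

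For the final claim, \eqref{todo} gives $sc_q^v(i)\geq sc_q^v(j)$ if and only if $\psi^p_i(v)\geq\psi^p_j(v)$, so the two functions induce the same order on $X$.

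The only step needing care is the uniqueness argument. One has to check that the chosen indicator pseudo-game isolates exactly one term, and handle the degenerate small cases: for $n=1$ every $q$ works, consistent with "$q_n$ arbitrary".
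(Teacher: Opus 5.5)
Your proposal is correct and follows essentially the same route as the paper: you cancel the coalitions containing both $i$ and $j$ to get $\sum_{T\in 2^{X\setminus\{i,j\}}} q_{|T|+1}\big(v(T\cup\{i\})-v(T\cup\{j\})\big)$, match coefficients with Lemma \ref{lem:semidiff}, and prove uniqueness by testing on the indicator pseudo-game of $T_0\cup\{i\}$. Your explicit checks that only $T=T_0$ contributes and that $q_n$ never appears are slightly more careful than the paper's write-up.
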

\begin{proof} If $n=1$, clearly, any choice for $q_1\in \mathbb{R}$ guarantees \eqref{todo} because $n=1$ forces $i=j$ and thus both sides in \eqref{todo} are zero. Assume next that $n\geq 2.$
By Lemma \ref{lem:semidiff}, we known that the following equality holds for every $v\in PG^X$ and every $i,j\in X$:
\begin{equation}\label{first2}
 \psi_i^p(v) -  \psi_j^p(v)=\sum_{T\in 2^{X\setminus\{i,j\}}}\big(p_{|T|}+p_{|T|+1}\big)\big(v(T\cup\{i\})-v(T\cup\{j\})\big).
\end{equation}
We compute now, for a generic $q\in \mathbb{R}^{n}$, $v\in PG^X$ and $i,j\in X$ the number $sc_q^v(i)-sc_q^v(j)$.
We have
\begin{eqnarray}\label{second} \nonumber sc_q^v(i)-sc_q^v(j) &=& \sum_{S \in \PX:i\in S}v(S)q_{|S|}-\sum_{S\in \PX:j\in S}v(S)q_{|S|}\\\nonumber
&=& \sum_{T \in 2^{X\setminus\{i,j\}}}\Big(v(T\cup\{i\})\,q_{|T\cup\{i\}|}+v(T\cup\{i,j\})\,q_{|T\cup\{i,j\}|}\\ \nonumber
&-& v(T\cup\{j\})\,q_{|T\cup\{j\}|}-v(T\cup\{i,j\})\,q_{|T\cup\{i,j\}|}\Big)\\ 
&=& \sum_{T \in 2^{X\setminus\{i,j\}}}q_{|T|+1} \big( v(T\cup\{i\})- v(T\cup\{j\})   \big).
\end{eqnarray}
Thus the expression in \eqref{first2} and the one in \eqref{second} are surely equal for all $v\in PG^X$ and all $i,j\in X$ if we choose $q$ such that the following condition is guaranteed:
\begin{equation}\label{q}
q_{t+1}=p_t+p_{t+1} \quad \hbox{for all}\quad 0\leq t\leq n-2,
\end{equation}
that is, we choose $q_s=p_{s-1}+p_s$ for $s\in[n-1]$, leaving $q_n\in \mathbb{R}$ arbitrary.

It remains to show that this is the only possible choice for $q.$ Indeed, assume that $q$ is such that for every $v\in PG^X$ and every  $i,j\in X$ the equality 
\begin{equation}\label{a}
 sc_q^v(i)-sc_q^v(j)=\psi_i^p(v)-\psi_j^p(v)   
\end{equation}
holds. Then using \eqref{first2} and \eqref{second}, we deduce that for every $v\in PG^X$ and every  $i,j\in X$ the equality 
\begin{equation}\label{b}
\sum_{T\in 2^{X\setminus\{i,j\}}}\big(p_{|T|}+p_{|T|+1}-q_{|T|+1}\big)\big(v(T\cup\{i\})-v(T\cup\{j\})\big)=0
\end{equation}
holds. Consider $i^*, j^*\in X$ distinct and fix $T^*\in 2^{X\setminus\{i^*,j^*\}}$ of size $t$ for some $0\leq t\leq n-2$. Note that this is possible because $n\geq 2.$
Define $v^*\in PG^X$ by setting $v^*(S)=0$ for all $S\in 2^X\setminus\{T^*\cup\{i^*\}\}$ and $v^*(T^*\cup\{i^*\})=1.$ Then \eqref{b} implies \eqref{q}, that is, $q_s=p_{s-1}+p_s$ for $s\in[n-1]$.
\end{proof}

\begin{example}\label{ex:scoreshap}
{\rm The choice  $q_{s}\coloneq \frac{1}{n\binom{n-1}{s-1}}+\frac{1}{n\binom{n-1}{s}}=\frac{1}{(n-s)\binom{n-1}{s-1}}$ for $s\in [n-1]$ and $q_n\coloneq 0$, where $n\coloneqq |X|$, leads to 
$$sc_q^v(i)=\sum_{S \in \PX\setminus\{X\}: i \in S} \frac{v(S)}{(n-|S|)\binom{n-1}{|S|-1}},$$
which realizes 
$$sc_q^v(i)-sc_q^v(j)=\phi_i(v)-\phi_j(v).$$

In particular, such $sc_q^v$  induces on $X$ the same order relation of the Shapley value.
}    
\end{example}

\subsection{The case of comparison games}

We now put in evidence the role played by the comparison pseudo-games with respect to the concept of score.
\begin{proposition} \label{comp-scor} Let $v\in PG^X$ and $w\in PG_0^X$ be its associated comparison pseudo-game. Let $q=(q_s)_{s=1}^{n}\in \mathbb{R}^{n}$. Then the functions $sc_q^v$ and $sc_q^{w}$ differ by an additive constant. In particular, the order relations associated with $sc_q^v$ and $sc_q^{w}$ are the same.
\end{proposition}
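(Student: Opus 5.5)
The plan is to use the decomposition $v=w+v_{\overline v}$ introduced before Theorem \ref{char0med}, where $w=w(v)=v-v_{\overline v}\in PG_0^X$. Together with the observation that the score $sc_q^v(i)$ is linear in $v$ for fixed $q$, this reduces the claim to computing the score of a constant pseudo-game. Linearity is immediate from Definition \ref{def:scoring2}, since $sc_q^v(i)=\sum_{S\in\PX:i\in S}v(S)q_{|S|}$ is a fixed linear combination of the values of $v$. Hence, for every $i\in X$,
\[
sc_q^v(i)=sc_q^{w}(i)+sc_q^{v_{\overline v}}(i).
\]

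The key step is to show that $sc_q^{v_{\overline v}}(i)$ does not depend on $i$. For the constant pseudo-game $v_{\overline v}$, with $n=|X|$, we have
\[
sc_q^{v_{\overline v}}(i)=\overline v\sum_{S\in\PX:i\in S}q_{|S|}=\overline v\sum_{s=1}^{n}\binom{n-1}{s-1}q_s .
\]
Here we count the coalitions of size $s$ containing $i$: there are $\binom{n-1}{s-1}$ of them, and this number is the same for every $i$. Call this constant $K$. Then $sc_q^v(i)-sc_q^{w}(i)=K$ for all $i\in X$, so the two score functions differ by an additive constant.

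Finally, adding the same constant $K$ to every value preserves all comparisons: $sc_q^v(i)\ge sc_q^v(j)$ if and only if $sc_q^{w}(i)\ge sc_q^{w}(j)$. Hence the induced order relations on $X$ coincide. No step is a real obstacle; the only point needing care is the counting argument showing that $K$ is independent of $i$, which rests on the symmetry of the weights $q_{|S|}$, since they depend only on coalition size.
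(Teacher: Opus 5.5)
Your proof is correct and is essentially the paper's argument: write $v=w+v_{\overline v}$, use that the score is linear in $v$, and note that $\overline v\sum_{S\ni i}q_{|S|}$ does not depend on $i$. Your explicit count $\sum_{s=1}^{n}\binom{n-1}{s-1}q_s$ justifies the one step the paper only asserts, namely that this sum is the same for every $i\in X$.
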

\begin{proof} Write $v=w+v_{\overline{v}}.$ Then we have
   $$sc_q^v(i)=\sum_{S \in \PX: i \in S} v(S)q_{|S|}= \sum_{S \in \PX: i \in S} (w(S)+\overline{v})q_{|S|}=sc_q^{w}(i)+\overline{v}\sum_{S \in \PX: i \in S} q_{|S|}.$$
Now simply note that the number $\displaystyle{\sum_{S \in \PX: i \in S} q_{|S|}}$ is the same for all $i\in S$.
\end{proof}
Proposition \ref{comp-scor} 
puts in evidence that, when computing scores of pseudo-games, one can largely rely on the subclass of comparison pseudo-games. Remarkably, we can produce an explicit formula for scores of comparison pseudo-games, which involves only the generating comparison function.
\begin{theorem}\label{descrizione-scores} Let $v\in PG_0^X$ be a comparison game,  $i\in X$ and $q=(q_s)_{s=1}^{n}\in \mathbb{R}^{n}$. Then 
\[sc_q^v(i) =\sum_{S,T \in 2^{X\setminus\{i\}}}\left(\Big(f_{S \cup \{i\},T\cup \{i\}} - f_{T\cup \{i\},S\cup \{i\}}\Big)\frac{ q_{|S|+1}-q_{|T|+1}}{2}+ \Big(f_{S\cup\{i\},T} - f_{T,S\cup\{i\}}\Big)q_{|S|+1}\right).
\]
\end{theorem}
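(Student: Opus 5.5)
The plan is to write $v=\rho(f)$ for a comparison function $f$; this is possible by Theorem \ref{char0med}, and it fixes the $f$ appearing in the statement. Then I expand the score directly. By Definition \ref{def:scoring2},
$$sc_q^v(i)=\sum_{S\in 2^{X\setminus\{i\}}} q_{|S|+1}\,v(S\cup\{i\})=\sum_{S\in 2^{X\setminus\{i\}}} q_{|S|+1}\sum_{U\in 2^X}\big(f_{S\cup\{i\},U}-f_{U,S\cup\{i\}}\big).$$
Exactly as in the proof of Theorem \ref{prop:banz}, I split the inner sum over $U\in 2^X$ into the part with $U=T\subseteq X\setminus\{i\}$ and the part with $U=T\cup\{i\}$, $T\subseteq X\setminus\{i\}$. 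This gives a double sum over $(S,T)\in(2^{X\setminus\{i\}})^2$ of two blocks:
$$q_{|S|+1}\big(f_{S\cup\{i\},T}-f_{T,S\cup\{i\}}\big)\quad\text{and}\quad q_{|S|+1}\big(f_{S\cup\{i\},T\cup\{i\}}-f_{T\cup\{i\},S\cup\{i\}}\big).$$

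The first block is already the second summand in the claimed formula, so it is left untouched. The second block is antisymmetric in $(S,T)$ apart from its weight. I apply Lemma \ref{appoggio} to the symmetric set $\mathcal{A}=(2^{X\setminus\{i\}})^2$ and average $G(S,T)$ with $G(T,S)$. Writing $g(S,T)=f_{S\cup\{i\},T\cup\{i\}}-f_{T\cup\{i\},S\cup\{i\}}$, we have $g(T,S)=-g(S,T)$. Hence
$$\tfrac12\big(q_{|S|+1}g(S,T)+q_{|T|+1}g(T,S)\big)=g(S,T)\,\frac{q_{|S|+1}-q_{|T|+1}}{2},$$
which is exactly the first summand in the statement. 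Adding the two blocks gives the formula.

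No real obstacle is expected. The only points needing care are the bookkeeping when splitting $2^X$ by membership of $i$, and checking that Lemma \ref{appoggio} is applied only to the "both contain $i$" block and not to the mixed block. The mixed block is not antisymmetric under $S\leftrightarrow T$, because swapping moves $i$ to the other coalition, so symmetrizing it would not simplify anything.
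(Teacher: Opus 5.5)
Your proposal is correct and follows essentially the same route as the paper. You expand $sc_q^v(i)$ via $v=\rho(f)$, split the inner sum according to whether $i$ lies in the second coalition, keep the mixed block unchanged, and symmetrize the block where both coalitions contain $i$ using Lemma \ref{appoggio}. The only cosmetic difference is that you reindex to $(2^{X\setminus\{i\}})^2$ before symmetrizing, whereas the paper applies the lemma on $\{(S,T)\in(2^X)^2: i\in S\cap T\}$ and reindexes afterwards.
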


\begin{proof} Let $f\in \mathcal{C}(2^X)$ be such that $v=\rho(f)$. We have
\begin{eqnarray*}
sc_q^v(i) &=& \sum_{S \in 2^X: i \in S} v(S) q_{|S|} \\
&=& \sum_{S \in 2^X: i \in S} \sum_{T \in 2^X} (f(S,T) - f(T,S)) q_{|S|} \\
&=& \sum_{S \in 2^X: i \in S}  \left( \sum_{T \in 2^X} f(S,T) - \sum_{T \in 2^X} f(T,S) \right)q_{|S|} \\
&=& \sum_{S \in 2^X: i \in S}  \left( \sum_{T \in 2^X: i \in T} f(S,T) + \sum_{T \in 2^X: i \notin T} f(S,T) - \sum_{T \in 2^X: i \in T} f(T,S) - \sum_{T \in 2^X: i \notin T} f(T,S) \right) q_{|S|}\\
&=& \sum_{S \in 2^X: i \in S}  \left(\Big( \sum_{T \in 2^X: i \in T} f(S,T) - \sum_{T \in 2^X: i \in T} f(T,S) \Big)q_{|S|}+  \Big( \sum_{T \in 2^X: i \notin T} f(S,T) - \sum_{T \in 2^X: i \notin T} f(T,S) \Big)q_{|S|} \right)\\
&=& \sum_{(S,T)\in (2^X)^2: i \in S\cap T}   (f(S,T) -  f(T,S) )q_{|S|} + \sum_{S \in 2^X: i \in S}\ \sum_{T \in 2^X: i \notin T} (f(S,T) -  f(T,S) )q_{|S|}. \\
\end{eqnarray*}  
We define now $\mathcal{A}\coloneqq \{(S,T)\in (2^X)^2: i \in S\cap T\}$ and consider separately the two terms appearing in the above sum, that is,
\[A\coloneqq \sum_{(S,T)\in \mathcal{A}}   (f(S,T) -  f(T,S) )q_{|S|} 
\]
and 
\[
B\coloneqq \sum_{S \in 2^X: i \in S}\ \sum_{T \in 2^X: i \notin T} (f(S,T) -  f(T,S) )q_{|S|}.
\]
By Lemma \ref{appoggio}, after having observed that $\mathcal{A}$ is symmetric, we have

\begin{eqnarray*}
A &=& \frac{1}{2}\Big [\sum_{(S,T)\in \mathcal{A}}\Big((f(S,T) -  f(T,S) )q_{|S|}+ (f(T,S) -  f(S,T) )q_{|T|}
\Big)\Big]\\
&=&\frac{1}{2}\Big[\sum_{(S,T)\in \mathcal{A}}\Big(f(S,T)(q_{|S|}-q_{|T|})- f(T,S)(q_{|S|}-q_{|T|})\Big)\Big]\\
&=& \sum_{(S,T) \in \mathcal{A}}\Big(f(S,T) - f(T,S)\Big)\frac{ q_{|S|}-q_{|T|}}{2}\\
&=& \sum_{S,T \in 2^{X\setminus\{i\}}}    \Big(f(S\cup\{i\},T\cup\{i\})) - f(T\cup\{i\},S\cup\{i\})\Big)\frac{ q_{|S|+1}-q_{|T|+1}}{2}.
\end{eqnarray*}
Moreover, by elementary renaming, we get
\begin{eqnarray*}
B &=&\sum_{U \in 2^X: i \notin S}\ \sum_{T \in 2^X: i \notin T} (f(U\cup\{i\},T) -  f(T,U\cup\{i\}) )q_{|U\cup\{i\}|}\\
&=& \sum_{S,T\in 2^{X\setminus\{i\}} }(f(S\cup\{i\},T) -  f(T,S\cup\{i\}) )q_{|S|+1}.
\end{eqnarray*}
Since $sc_q^v(i)=A+B$ we have established the required formula.

\end{proof}

\begin{remark}\label{rem1}{\rm
 Note that if  $q_s=\frac{1}{2^{n-2}}$ for all $s\in[n]$ , where $n=|X|,$ the formula established in Theorem \ref{descrizione-scores}, taking into account Theorem \ref{prop:banz}, gives that, for every comparison pseudo-game $v$, $\beta(v)=sc_q^v$. This fact is in agreement with Proposition \ref{banzh}.}
\end{remark}


\begin{remark}{\rm 
Let $X$ be such that $n=|X|=3$. Consider the set of probability vectors with respect to $X$
given by $\mathcal{P}=\{(p_0,p_1,p_2)\in [0,1]^3: p_0=p_2=x, \ p_1=\frac{1}{2}-x, \ x \in[0, \frac{1}{2}]\}$. Notice that, as any probability vector with respect to $X$, $p_0+2p_1+p_2=1$. Moreover, $p_0+p_1=p_1+p_2=\frac{1}{2}$. The probability vector $(\frac{1}{4},\frac{1}{4},\frac{1}{4})$, corresponding to the Banzhaf value $\beta$, and the probability vector $(\frac{1}{3},\frac{1}{6},\frac{1}{3})$,  corresponding to the Shapley value $\phi$, are  elements of $\mathcal{P}$. 
In general, for every probability vector $(p_0,p_1,p_2) \in \mathcal{P}$,  the choice $q_1=p_0+p_1$, $q_2=p_1+p_2$ and $q_3=\frac{1}{2}$, determine the same score $sc_q^v$ so, by Remark \ref{rem1}, $sc_q^v=\beta(v)$ for every comparison pseudo-game $(X,v)$.  As a consequence, by Theorem \ref{scores-semivalue},  $\psi_i^p(v)-\psi_j^p(v)=\beta_i(v)-\beta_j(v)$ for every $p \in \mathcal{P}$, all $i,j\in X$ and every comparison pseudo-game $(X,v)$. 
It easily follows that, for every pseudo-game $(X,v)$ and every $p,p'\in \mathcal{P}$, there exists $k\in \mathbb{R}$ such that $\psi_i^{p'}(v)-\psi_i^{p}(v)=k,$ for all $i\in X.$ In particular, this yields that the Shapley value and the Banzhaf value not only share the same ranking for the players, as pointed out in \cite{saari2001some}, but differ by an additive constant.

 Finally, note that, for the specific comparison pseudo-game $(X,v)$ of Example \ref{ex:simpleone}, $\beta(v)\neq\phi(v)$ so, again by Remark \ref{rem1}, also $sc_q^v\neq \phi(v)$. This shows that, despite their apparent similarities, formulas in Theorems \ref{prop:banz} and \ref{descrizione-scores}, in general, yield different results when $p\in \mathcal{P}\setminus \{(\frac{1}{4},\frac{1}{4},\frac{1}{4})\}$  and $q$ is such that $q_1=q_2=q_3=\frac{1}{2}$.}
\end{remark}

\section{Conclusions}\label{sec:concl}

In this work, we introduce and study a novel conceptual framework designed to transform the information encoded in pairwise comparisons between coalitions into an attribution of individual importance for players. We demonstrate that this framework is sufficiently flexible to represent all  pseudo-games, 
and that it naturally accommodates scenarios where only a limited number of coalition comparisons are available, or where different opinions about the relative strength of coalitions exist.\\ 
Among the concepts explored in this work, the notion of score for a player $i$, defined as a weighted average of the worth of coalitions containing $i$, is a relatively unexplored concept in the literature on coalitional situations. This is likely due to the perception that it does not account for players' marginal contributions. Nevertheless, we prove that if the goal is to rank players based on their coalitional performance, the score suffices to determine the relative differences between players as dictated by associated semivalues.
Moreover, we demonstrate that a score associated with the Banzhaf value in a comparison pseudo-game coincides with the Banzhaf value,
suggesting a novel interpretation of the Banzhaf value. Surprisingly, this interpretation reveals that the Banzhaf value can be understood as an average value over the coalitions, rather than as an average marginal contribution, as originally formulated.\\
As a future research direction, it would be compelling to conduct an axiomatic study of the notion of score in the domain of comparison functions.\\
It would also be interesting to investigate how other solutions that are not semivalues, such as the family of linear, efficient, and symmetric values \cite{nembua2012linear}, can be expressed within the framework of comparison functions, as well as their connections with the notion of score.\\
Studying efficient solutions could also offer fresh insights into allocation problems, particularly by examining the allocation of utility values derived from comparison of coalition pairs, rather than from single coalitions, thus unlocking a new perspective on the classic cooperative game model.

\bibliographystyle{plain}
\bibliography{references}

\end{document}